\newtheorem{theorem}{Theorem}
\newtheorem{corollary}{Corollary}
\newtheorem{proposition}{Proposition}
\newtheorem{remark}{Remark}
\newtheorem{definition}{Definition}
\numberwithin{theorem}{section}
\numberwithin{lemma}{section}
\numberwithin{corollary}{section}
\numberwithin{proposition}{section}
\numberwithin{remark}{section}
\numberwithin{definition}{section}
\newcommand{\bs}[1]{\boldsymbol{#1}}
\newcommand{\SL}[1]{\stackrel{\circ}{#1}}
\newcommand{\dist}{{\rm dist}}
\title{Characterization of Green's function of discrete Schr\"odinger operator on a finite graph by its spanning subgraphs}
\author{Yusuke Higuchi$^1$, Etsuo Segawa$^2$\\
$^1${\small Department of Mathematics,
Gakushuin University,} \\
{\small Tokyo 171-8588, Japan}
\\
$^2${\small Graduate School of Environment and Information Sciences, Yokohama National University,}\\ {\small Hodogaya, Yokohama 240-8501, Japan
}
}
\date{}
\begin{document}

\maketitle

\par\noindent
{\bf Abstract}. 
The Green's function of the discrete Sch\"odinger operator on a finite graph is considered. This setting reproduces Laplacian and signless Laplacian by adjusting appropriate potentials. We show two ways of the expression for the Green's function by using graph structures.  
The first way is based on the factor of the graph by subtrees which have uni-self-loops; the second way is based on that by odd unicycle graphs. 
\\

\noindent{\it Key words and phrases.} 
Discrete Schr\"odeinger operator, 
Spanning subgraph, 
Odd unicyclic factor


\section{Introduction}
In the research area of the discrete spectral geometry,
one of main topics is to characterize the spectral structure of a discrete
Sch{\"o}dinger operator
in terms of a certain family of geometric properties of the graph.
When we try to analyze the spectral structure, the Green's function of the operator plays an important role. Once focusing on the Green's function itself,
it is discussed that, for a special class of operators, that is, the Laplacian and the signless Laplacian, the Green's function can be given by the quantity of the geometry of graphs.
For instance, it is well known that the Green's function of the Laplacian of a graph with a grounding point can be expressed by counting the number of spanning trees, and spanning forests constructed by two subtrees (see \cite{Boll} and its references therein). 
On the other hand, the Green's function of the signless Laplacian of a graph can be expressed  by counting the number of spanning odd-unicycle graphs~\cite{CRS2007}. 
In this paper, we give the geometric structure of the underlying graph to express the Green's function for a general class of discrete Schr{\"o}dinger operators including the Laplacian and the signless Laplacian.
Let us briefly state our setting; more general setting and results will be discussed in subsequent sections. A graph $\Gamma = (X,A)$ is a finite graph with the sets of vertices $X$ and edges $A$;
$\delta X \subset X$ is fixed as the boundary of $\Gamma$. Moreover, for the weighted adjacency operator $M$ with a potential
function $V$ on $\Gamma$, we set the discrete Schr\"odinger operator  $\mathcal{M}_V = M + V$.
In this paper we consider the following problem:  
\[  (\mathcal{M}_V-zI) f= g\]
with the boundary condition such that for any $x\in \delta X$, $f(x)=g(x)=0$. 
The solution of the above problem can be described as
\[ f(x)=\sum_{y\in X\setminus \delta X}G_z(x,y)g(y), \]
where $G_z(\cdot,\cdot)$ is called the Green's function.
We characterize the Green's function using some graph factors. 
Note that if we set $V(u)=-\deg(u)$, then the Laplacian is reproduced while if we set $V(u)=+\deg(u)$, then the signless Laplacian is reproduced. Recently, in a typical series of studies on quantum walks (for examples, \cite{FelHil1, FelHil2,HS,HS_PON4}, and see \cite{Portugal} for the review on quantum walks and its references therein), which are related to a stationary discrete Schr{\"o}dinger equation~\cite{KHiguchi}, the following generalized Laplacian interpolating the Laplacian ($\alpha=1$) and the signless Laplacian ($\alpha=-1$) is introduced in \cite{HS_PON4}:  
\[L_\alpha=M-j_+(\alpha)D+j_-(\alpha)\Pi_{\delta X}\;\;(\alpha\in\mathbb{C}\setminus\{0\}), \] 
where $D$ is the degree matrix, $\Pi_{\delta X}$ is the projection onto $\mathbb{C}^{\delta X}$ and $j_{\pm}(\alpha)=(\alpha\pm \alpha^{-1})/2$.  
It is shown in \cite{HS_PON4} that its Green's function plays key roles to express the stationary state and also its scattering information of the quantum walk. 
It is also shown that the Green's function of a gram matrix induced by $2$-cell embedding on the orientable surface describes the stationary state of a quantum walk~\cite{HS_OpticalQW}.  
In this paper, our setting includes the above matrices and shows the underlying graph factors.  

This paper is organized as follows. 
In Section~2, the setting of graph and the discrete-the Green's function of the Schor{\"o}dinger equation is presented. In Section~3, we give our main theorem which describe two kinds of the representation for the Green's function using the graph factors and discuss the relation between the two expressions. In Section~4, the proof of  the main theorem is presented. 
In Section~5, we demonstrate the computational way proposed by this paper of the Green funciton using the graph factors. 
\section{Setting}
\subsection{Graph notation}
Let $\Gamma=(X, A)$ be a finite, connected, and symmetric digraph. Here for any $a\in A$, there uniquely exists an inverse arc $\bar{a}$ in $A$. This graph $\Gamma$ may have a multiple edge but no self-loops.  
The terminus and origin vertices of $a\in A$ are denoted by $t(a)$ and $o(a)$, respectively. 
The boundary of $\Gamma$, $\delta X$, is a proper subset of $X$. 
The support edge of $a\in A$, which is undirected, is described by $|a|$. Note that $|a|=|\bar{a}|$. We set $E=E(\Gamma)=\{|a| \;:\; a\in A\}$ as the set of edges.  
The spanning subgraph $\Gamma'$ (or the factor) of $\Gamma$ is the subgraph of the underlying undirected graph $(X(\Gamma),E(\Gamma))$ satisfying $X(\Gamma)=X(\Gamma')$.  
A spanning tree is a tree and spanning subgraph.
 A spanning forest is a spanning subraph such that each component is a tree.
For a subraph $W$ of $\Gamma=(X,A)$ or $(X,E(\Gamma))$, 
we often write $\Gamma$ for the set of vertices of $W$ without confusion.

\subsection{Adjacency matrix with potential}\label{sect:graphs}
For any finite countable set $\Omega$, we set $\mathbb{C}^{\Omega}$ as the vector space whose standard basis vectors are labeled by $\Omega$. 
The weight on each edge is denoted by a map $w: E\to \mathbb{C}$.  
The weighted adjacency matrix, Laplacian and signless Laplacian on the graph $\Gamma$ are denoted by 
\begin{align*} 
(Mf)(u) &=\sum_{e\in A\;;\;t(e)=u}w(|e|)f(o(e)); \\
(Lf)(u) &= \sum_{e\in A\;;\;t(e)=u}w(|e|)(\; f(t(e))-f(o(e)) \;); \\
(Qf)(u) &= \sum_{e\in A\;;\;t(a)=u}w(|e|)(\; f(t(e))+f(o(e)) \;); 
\end{align*}
for any $f\in \mathbb{C}^X$ and $u\in X$.  
We set the weight on $X$ which is a map $V: X\to \mathbb{C}$ as a potential. 
We use the same notation as the potential $V\in \mathbb{C}^X$ for the multiplication operator such that $(Vf)(u)=V(u)f(u)$ for any $f\in \mathbb{C}^{X}$ and $u\in X$. 
Let $\mathcal{M}_V:\mathbb{C}^X\to \mathbb{C}^X$ be the weighted adjacency matrix of $\Gamma$ with the potential $V$ such that 
\[ \mathcal{M}_V=M+V  \]
for any $f\in \mathbb{C}^{A}$ and $u\in X$. 
Note that if $V(u)=-\sum_{t(e)=u} w(|e|)$, then $L=M+V$ while $V(u)=+\sum_{t(e)=u} w(|e|)$, then $Q=M+V$. 
Then we will concentrate on $\mathcal{M}_V$. 

Fix $\delta X \subsetneq X$ and  
let $\chi: \mathbb{C}^{X}\to \mathbb{C}^{X\setminus \delta X}$ such that 
\[ (\chi f)(u)=f(u) \]
for any $f\in \mathbb{C}^{X}$ and $u\in X\setminus \delta X$. Then the adjoint $\chi^{*}: \mathbb{C}^{X\setminus \delta X}\to \mathbb{C}^{X}$ is described by 
\[ (\chi^* g)(u)=\begin{cases} g(u) & \text{: $u\in X\setminus \delta X$,}\\ 0 & \text{: otherwise} \end{cases} \]
for any $g\in \mathbb{C}^{X\setminus \delta X}$ and $u\in X$. 
Note that $\chi$ is represented by the following $|X\setminus \delta X|\times |X|$ matrix  
\[\chi\cong [ I_{X\setminus \delta X} \;|\; \bs{0} ]. \]
Our interest is the resolvent of $\chi \mathcal{M}_V \chi^*$; that is, 
\[ G(z;\Gamma;\delta X;V):=(\chi \mathcal{M}_V\chi^*-zI )^{-1}.  \]
We will describe it using the factors of graphs. 

To this end, let us deform the original graph in the following way. \\

\noindent{\bf The graph deformation $\stackrel{\circ}{\Gamma}=(X,E\cup S_{\delta X})$:}
For the undirected graph $(X(\Gamma),E(\Gamma))$, we add the self-loop to every vertex $X\setminus \delta X$; we write $S_{\delta X}\cong X\setminus \delta X$ for the set of the self loops.  
Such a resultant graph is denoted by $\stackrel{\circ}{\Gamma}$. 

For an undrected graph $\Gamma'$, 
let $\mathcal{OUC}(\Gamma')$ be the set of all the connected and odd unicyclic subgraphs in $\Gamma'$, where the odd unicyclic graph is a graph which has exactly one cycle and its length is odd.  Remark that for any subgraph in $\mathcal{OUC}(\Gamma')$, if an appropriate edge is removed from this subgraph, then it becomes a tree. 
We regard each self-loop as an odd cycle and an isolated vertex as a tree in this paper. Then note that an isolated vertex with one self-loop belongs to $\mathcal{OUC}(\SL{\Gamma})$.  
As a special class in $\mathcal{OUC}(\SL{\Gamma})$, we put  $\SL{\mathcal{T}}\subset \mathcal{OUC}(\SL{\Gamma})$ by the subset of $\mathcal{OUC}(\SL{\Gamma})$ each element of which is a tree with one self-loop. 
Set the following families of subgraphs of $\SL{\Gamma}$ which are the ``required graph parts" to describe the resolvent and depend on the operators $L$ and $Q$ as follows:  
\begin{align*}
\mathcal{F}_J(\Gamma; \delta X)&=\begin{cases}
\SL{\mathcal{T}}
& \text{: $J=L$,} \\
\mathcal{OUC}(\SL{\Gamma}) & \text{: $J=Q$,} \end{cases}\\
\mathcal{F}_J(\Gamma; \delta X; \ell,m)&=
\begin{cases}
\SL{\mathcal{T}}(\ell,m) & \text{: $J=L$,} \\
\mathcal{OUC}(\SL{\Gamma};\ell,m) & \text{: $J=Q$} 
\end{cases}
\end{align*}
for $u_\ell,u_m\notin \delta X$. 
Here $\SL{\mathcal{T}}(\ell,m)$ is the subset of $\SL{T}$ such that every subgraph has neither the vertices $u_\ell$ nor $u_m$, 
and $\mathcal{OUC}(\SL{\Gamma};\ell,m)$ is the subset of $\mathcal{OUC}(\SL{\Gamma})$ such that each subgraph has neither the vertices $u_\ell$ nor $u_m$.  Note that $\SL{\mathcal{T}}(\ell,m)\subset \mathcal{OUC}(\SL{\Gamma};\ell,m)$. 
See Section~\ref{sect:example} for its example. 
\begin{definition}[The graph factor for describing the resolvent]\label{def:graphfactor}
Let $\Gamma=(X,A)$ be a connected symmetric graph with the vertex set $X=\{u_1,\dots,u_N\}$ and the boundary $\delta X$. 
Then the graph factors of $\SL{\Gamma}=(X,E\cup S_{\delta 
 X})$ are defined as follows. 
For $J\in \{L,Q\}$, 
\begin{align*}
\mathcal{H}_J&(\Gamma;\delta X)  \\
&:=
\{ \text{$H$ is a spanning subgraph of $\SL{\Gamma}$ }\;|\; \\
&\qquad\qquad 
\text{ For each connected component $W$ in $H$, }\\
&\qquad\qquad  
\text{ (i) if $W \cap \delta X \neq \emptyset$, then  
$|W\cap \delta X|=1$ and $W$ is a tree; }\\
&\qquad\qquad 
\text{ (ii) otherwise, $W\in \mathcal{F}_J(\Gamma;\delta X)$}
\}
\end{align*}
on the other hand, for $u_\ell, u_m\notin \delta X$,  
\begin{align*}
\mathcal{H}_J&(\Gamma;\delta X;\ell,m)  \\
&:=\{ \text{$H$ is a spanning subgraph of $\SL{\Gamma}$ }\;|\; \\
&\qquad
\text{ For each connected component $W$ in $H$, }\\
&\qquad\qquad
\text{ \;\;$(i)'$ if $W\cap \delta X\neq \emptyset$, then $|W\cap \delta X|=1$, $W\cap \{u_\ell,u_m\}=\emptyset$ and $W$ is a tree; }\\ 
&\qquad\qquad 
\text{ \;(ii)$'$ if $W\cap \{u_\ell,u_m\}\neq \emptyset$, then $W\supset \{u_\ell,u_m\}$, $W\cap \delta X =\emptyset$ and $W$ is a tree;  }\\
&\qquad\qquad 
\text{ (iii)$'$ otherwise, $W\in \mathcal{F}_J(\Gamma;\delta X;\ell,m)$ }
\}
\end{align*}

\end{definition}
See Section~\ref{sect:example} for its  example. 
\section{Main theorem and related results}\label{sect:main}
\subsection{Main theorem}
For a given $w: E\to \mathbb{C}$ and $V: X\to \mathbb{C}$, 
we put the weight $w_J$ on $E\cup S$ ($J=L,Q$) by 
\[ w_L(z;e)=\begin{cases} -w(e) & \text{: $e\in E$,} \\
-z+V_L(e) & \text{: $e\in S_{\delta X}\cong X\setminus \delta X$,}
\end{cases}
\]
\[
w_Q(z;e)=\begin{cases} +w(e) & \text{: $e\in E$,} \\
-z+V_Q(e) & \text{: $e\in S_{\delta X}\cong X\setminus \delta X$,}
\end{cases}
\]
where 
\[V_J(u):=\begin{cases}
V(u)+\pi(u) & \text{: $J=L$,}\\
V(u)-\pi(u) & \text{: $J=Q$,}
\end{cases}
\] 
for any $u\in X\setminus \delta X$. 
Here we set $\pi(u):=\sum_{a\in A\;;\;t(a)=u} w(|a|)$ for any $u\in X$. 
It is easy to see that, for a given $w$ and $V$,  
\[ \mathcal{M}_V=L_{w_L}+V_L=Q_{W_Q}+V_Q=-L+V_L=Q+V_Q, \]
where $L_{w_L}$ and $Q_{w_Q}$ are defined by the weight $w_L$ and $w_Q$., respectively, and $L$ and $Q$ are defined by the weight $w$. 


\begin{definition}[Weights of spanning subgraph and  families of spanning subgraphs] \label{def:weightgraph} 
\noindent \\
For a subgraph $H\subset \SL{\Gamma}$, the weight of $H$ is defined by  
\[ W_J(z;H)=\prod_{e\in E(H)}w_J(z;e).  \]
The weights of families $\mathcal{H}_J(\Gamma;\delta X)$ and $\mathcal{H}_J(\Gamma;\delta X;\ell,m)$ are defined by 
\begin{align*}
    \iota_1(z;\mathcal{H}_J(\Gamma;\delta X)) &= 
    \begin{cases}
    \sum_{H\in \mathcal{H}_L(\Gamma;\delta X)}W_L(z;H) & \text{: $J=L$}\\
    \\
    \sum_{H\in \mathcal{H}_Q(\Gamma;\delta X)}4^{b_1(H\setminus S(H))}W_Q(z;H) & \text{: $J=Q$}
    \end{cases}
    \\
    \iota_2(z;\mathcal{H}_J(\Gamma;\delta X;\ell,m)) &= 
    \begin{cases}
    \sum_{H\in \mathcal{H}_L(\Gamma;\delta X;\ell,m)}W_L(z;H) & \text{: $J=L$}\\
    \\
    \sum_{H\in \mathcal{H}_Q(\Gamma;\delta X;\ell,m)}4^{b_1(H\setminus S(H))}\;(-1)^{\dist_H(u_\ell,u_m)}\;W_Q(z;H)& \text{: $J=Q$}
    \end{cases}
\end{align*} 
Here $\dist_H(u_\ell,u_m)$ is the shortest length of path between $u_\ell$ and $u_m$ in $H$, $b_1(H\setminus S(H))$ is the number of connected components of $\mathcal{OUC}(\Gamma)$ in $H$ and $S(H)$ is the set of self-loops in $H$. 
\end{definition}

\begin{theorem}\label{thm:main}
Let $\Gamma=(X,A)$ be a connected symmetric graph with the vertex set $X=\{u_1,\dots,u_N\}$ and the boundary $\delta X=\{u_{N-|\delta X|+1},\cdots,u_N\}$.  
Then the resolvent of $\mathcal{M}_V$,  $G_z:=G(z;\Gamma;\delta X;V)\in \mathbb{C}^{(X\setminus \delta X)\times (X\setminus \delta X)}$, is formed by the following Green's function
\begin{align}
G_z(u_\ell,u_m) &= \frac{\iota_2(z;\mathcal{H}_L(\Gamma;\delta X;\ell,m))}{\iota_1(z;\mathcal{H}_L(\Gamma;\delta X))} \label{eq:main1}\\
&= \frac{\iota_2(z;\mathcal{H}_Q(\Gamma;\delta X;\ell,m))}{\iota_1(z;\mathcal{H}_Q(\Gamma;\delta X))} \label{eq:main2}
\end{align}
for any $u_\ell,u_m\in X\setminus\delta X$. 
\end{theorem}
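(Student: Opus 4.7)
The natural plan is to apply Cramer's rule to the matrix $A:=\chi(\mathcal{M}_V-zI)\chi^{*}$ and then to interpret $\det A$ and its $(m,\ell)$-cofactor combinatorially through matrix-tree / matrix-forest expansions on $\SL{\Gamma}$. The identities
\[ \mathcal{M}_V-zI \;=\; L_{w_L}+(V_L-z) \;=\; Q_{w_Q}+(V_Q-z) \]
absorb the diagonal correction $V_J(u)-z$ as the $w_J$-weight of the self-loop $s_u \in S_{\delta X}$ attached at the non-boundary vertex $u$. This is exactly what makes the same operator $A$ admit two parallel graph-theoretic expansions, yielding (\ref{eq:main1}) from the Laplacian side and (\ref{eq:main2}) from the signless Laplacian side.

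For (\ref{eq:main1}) I would invoke a generalized matrix-forest theorem in the spirit of Forman: for any symmetric ``Laplacian plus diagonal'' operator $L+\Lambda$ of a weighted graph, $\det(L+\Lambda)$ expands as a sum over spanning subgraphs whose components are either grounded trees or trees with exactly one marked vertex, weighted by the product of edge weights and the $\Lambda$-value at the marked vertex. Applied to $L_{w_L}+(V_L-z)$ on $\SL{\Gamma}$ with $\delta X$ playing the role of the grounding set, and reading the marked vertex as the location of a self-loop, this directly yields $\det A = \iota_1(z;\mathcal{H}_L(\Gamma;\delta X))$. The same expansion applied to the $(m,\ell)$-minor selects spanning configurations in which $u_\ell$ and $u_m$ must share a common non-grounded, non-self-looped tree component, producing $\det A^{(m,\ell)}=(-1)^{\ell+m}\iota_2(z;\mathcal{H}_L(\Gamma;\delta X;\ell,m))$ after the Cramer sign is absorbed.

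For (\ref{eq:main2}), the parallel ingredient is the signless matrix-tree theorem of \cite{CRS2007}: writing $Q_{w_Q}+(V_Q-z) = B_Q D_{w_Q} B_Q^{T}$ with $B_Q$ the unsigned incidence matrix of $\SL{\Gamma}$ and applying Cauchy--Binet, one uses the classical fact that a square sub-minor of $B_Q$ is non-zero precisely when every non-grounded component of the chosen edge set carries exactly one odd cycle (a self-loop being treated as an odd cycle of length $1$), and then has absolute value $2^{b_1}$. Squaring reproduces the factor $4^{b_1(H\setminus S(H))}$ and gives $\det A = \iota_1(z;\mathcal{H}_Q(\Gamma;\delta X))$. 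The off-diagonal case follows by the analogous all-minors expansion, and the extra sign $(-1)^{\dist_H(u_\ell,u_m)}$ in $\iota_2$ emerges from the unique $u_\ell$--$u_m$ path in the tree component joining the two distinguished vertices: since the two unsigned sub-incidence minors differ in which row has been deleted, their product collects an alternating sign at each edge along that path.

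The hardest step will be the sign bookkeeping in the off-diagonal $Q$-case. One must check that the product of the two distinct unsigned sub-incidence minors, taken over a spanning subgraph $H \in \mathcal{H}_Q(\Gamma;\delta X;\ell,m)$, evaluates exactly to $(-1)^{\ell+m}(-1)^{\dist_H(u_\ell,u_m)}\,4^{b_1(H\setminus S(H))}$. This demands isolating the contribution of the distinguished $u_\ell$--$u_m$ path from the freely orientable non-loop odd-unicyclic components, and simultaneously distinguishing those components from self-looped tree components, which do \emph{not} acquire the factor $4$. These distinctions are precisely what the notation $b_1(H\setminus S(H))$ and $\dist_H$ of Definition~\ref{def:weightgraph} are designed to encode, so once the combinatorial book-keeping is aligned with the linear algebra, the identities (\ref{eq:main1}) and (\ref{eq:main2}) should drop out in parallel.
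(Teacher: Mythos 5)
Your plan is essentially the paper's own proof: both rest on Cramer's rule for $\chi(\mathcal{M}_V-zI)\chi^{*}$, the factorizations $\mathcal{L}_{V'}-zI=B_LD_LB_L^{*}$ and $\mathcal{Q}_{V''}-zI=B_QD_QB_Q^{*}$ over $\SL{\Gamma}$, Cauchy--Binet, the classification of edge sets with nonvanishing incidence minors (trees with one self-loop, resp.\ odd-unicyclic components, plus trees grounded once in $\delta X$), and the same sign computations giving $(-1)^{\ell+m}$ and $(-1)^{\dist_H(u_\ell,u_m)}\,4^{b_1(H\setminus S(H))}$. The only cosmetic difference is that you invoke a Forman-type matrix-forest theorem for the Laplacian side where the paper derives the same expansion directly from Cauchy--Binet.
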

\begin{remark}\label{rem:MLQ}
If we set $V'(u)=V(u)+\pi(u)$, then 
\[ G(z;\Gamma;\delta X;V)=(\chi\mathcal{L}_{V'}\chi^*-zI)^{-1}, \]
where $\mathcal{L}_{V'}=-L+V'$. \\
On the other hand, if we set 
$V''(u)=V(u)-\pi(u)$, then 
\[ G(z;\Gamma;\delta X;V)=(\chi\mathcal{Q}_{V''}\chi^*-zI)^{-1}, \]
where $\mathcal{Q}_{V''}=Q+V''$. 
\end{remark}
\noindent This means that the resolvent of the weighted Laplacian with an arbitrary potential $V$ can be obtained by replacing $V(u)$ with $V(u)-\pi(u)$ in Theorem~\ref{thm:main} while that of the weighted signless Laplacian with the potential $V$ can be obtained by replacing $V(u)$ with $V(u)+\pi(u)$ in Theorem~\ref{thm:main}. \\
\subsection{Related results}
Before going to the proof of Theorem~\ref{thm:main}, 
let us compare the two expressions (\ref{eq:main1}) and (\ref{eq:main2}) of the same Green's function which derive from $\det(\chi(\mathcal{L}_{V'}-zI)\chi^*)$ and $\det(\chi(\mathcal{Q}_{V''}-zI)\chi^*)$, respectively. 
\begin{proposition}\label{prop:iota12}
\[\iota_1(\mathcal{H}_L(\Gamma;\delta X))=\iota_1(\mathcal{H}_Q(\Gamma;\delta X)),\;\iota_2(\mathcal{H}_L(\Gamma;\delta X;\ell,m))=\iota_2(\mathcal{H}_Q(\Gamma;\delta X;\ell,m))\]
for any $u_\ell,u_m\in X\setminus \delta X$. 
\end{proposition}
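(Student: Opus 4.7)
The plan is to identify each quantity appearing in the proposition with a single algebraic object so that the two sides become equal by construction. By Remark \ref{rem:MLQ} the two operators $\mathcal{L}_{V'}$ and $\mathcal{Q}_{V''}$ coincide (both equal $\mathcal{M}_V$), and therefore the matrix $\chi(\mathcal{M}_V-zI)\chi^*$ admits \emph{two} different graph-theoretic expansions of its determinant and of each of its cofactors.

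Concretely, it suffices to establish the determinantal identifications
\begin{align*}
\iota_1(\mathcal{H}_L(\Gamma;\delta X))&=\det\bigl(\chi(\mathcal{L}_{V'}-zI)\chi^*\bigr),\\
\iota_1(\mathcal{H}_Q(\Gamma;\delta X))&=\det\bigl(\chi(\mathcal{Q}_{V''}-zI)\chi^*\bigr),
\end{align*}
together with the analogous pair expressing $\iota_2(\mathcal{H}_J(\Gamma;\delta X;\ell,m))$ ($J=L,Q$) as the corresponding $(\ell,m)$-cofactor of the same compressed matrix. Once these hold, Remark \ref{rem:MLQ} equates the two right-hand sides in each pair and the proposition follows at once. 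These identifications are precisely what the proof of Theorem \ref{thm:main} in Section 4 will establish anyway, since they are what feeds Cramer's rule for the Green's function $G_z$.

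For the $L$-case I would use a classical all-minors matrix-tree expansion applied to $-L+V_L$: regrouping the Leibniz expansion of the determinant according to the cycle structure of the permutation, only terms supported on spanning subgraphs whose connected components either meet $\delta X$ exactly once or carry one self-loop survive, which gives precisely $\iota_1(\mathcal{H}_L(\Gamma;\delta X))$. For the $Q$-case I would use the signless analogue, where the surviving components are spanning odd-unicyclic subgraphs: each genuine odd cycle contributes a factor of $4$ (one $2$ from the pair of Leibniz permutations realizing the cycle, and one $2$ from the two orientations of its edges), yielding the weight $4^{b_1(H\setminus S(H))}$. For the $\iota_2$ identities the same expansions applied to the $(\ell,m)$-cofactor produce, in the $Q$-case only, the extra sign $(-1)^{\dist_H(u_\ell,u_m)}$ coming from the obligatory $u_\ell$-to-$u_m$ path appearing in every surviving permutation.

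The main obstacle will be the signless case and the precise verification that the weights $4^{b_1(H\setminus S(H))}$ and $(-1)^{\dist_H(u_\ell,u_m)}$ really are the correct combinatorial compensations: one must show that contributions of even-length permutation cycles cancel in sign-reversed pairs, that genuine odd cycles are counted with exactly the stated multiplicity, and that deleting the rows and columns indexed by $u_\ell$ and $u_m$ in the cofactor preserves this cancellation structure up to exactly the claimed distance sign. Once this bookkeeping is done, the Laplacian case is routine and Proposition \ref{prop:iota12} drops out.
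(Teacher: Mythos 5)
Your argument is essentially the paper's own proof: by Remark~\ref{rem:MLQ} the matrices $\chi(\mathcal{L}_{V'}-zI)\chi^*$ and $\chi(\mathcal{Q}_{V''}-zI)\chi^*$ coincide with $\chi(\mathcal{M}_V-zI)\chi^*$, and the proof of Theorem~\ref{thm:main} identifies $\iota_1(\mathcal{H}_J(\Gamma;\delta X))$ with the common determinant and $\iota_2(\mathcal{H}_J(\Gamma;\delta X;\ell,m))$ with the common $(\ell,m)$-cofactor for both $J=L$ and $J=Q$, whence both equalities follow. Your closing sketch of a Leibniz-expansion derivation of these determinantal identities is a plausible but unneeded alternative to the Binet--Cauchy incidence-matrix computation the paper actually carries out in Section~4; for this proposition it suffices to cite those results.
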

\begin{proof}
By Remark~\ref{rem:MLQ}, it holds that  $\Theta_L:=\chi(\mathcal{L}_{V'}-zI)\chi^*$ and 
$\Theta_Q:=\chi(\mathcal{Q}_{V''}-zI)\chi^*$ are equal to $\chi(\mathcal{M}_V-zI)\chi^*=G^{-1}(z;\Gamma;\delta X; V)$.  
Through the proof of Theorem~\ref{thm:main}, it turns out that $\det\Theta_J$ is
identical with  $\iota_1(\mathcal{H}_J(\Gamma;\delta X))$ for each $J\in \{L,Q\}$. Then we have \[\iota_1(\mathcal{H}_L(\Gamma;\delta X))=\iota_1(\mathcal{H}_Q(\Gamma;\delta X)).\]
On the other hand, since the denominators of (\ref{eq:main1}) and $(\ref{eq:main2})$ are the same, that is, $\iota_1(\mathcal{H}_L(\Gamma;\delta X))=\iota_1(\mathcal{H}_Q(\Gamma;\delta X))$, the numerators of (\ref{eq:main1}) and $(\ref{eq:main2})$ are the same; that is, 
\[\iota_2(\mathcal{H}_L;\Gamma;\delta X;\ell,m)=\iota_2(\mathcal{H}_Q;\Gamma;\delta X;\ell,m).\]
\end{proof}
Using this proposition, we give an expression for the number of typical spanning forests. In particular,  (\ref{eq:1}) for $|\delta X|=1$ is known as  the matrix-tree theorem (for example see \cite{Boll}). By Theorem~\ref{thm:main}, some quantities (see LHSs of  (\ref{eq:2}) and (\ref{eq:2'}) ) which are induced by the graph factors $\mathcal{H}_Q(\Gamma;\delta X)$ and $\mathcal{H}_Q(\Gamma;\delta X;\ell,m)$ related to the odd unicylic factor are equivalent to those of spanning forests. 
\begin{corollary}\label{cor:OUC->tree}
Let $L_{\delta X}:= \chi (M+V)\chi^*$ be the Laplacian on $\kappa$-regular graph, where $w(e)=-1$ and $V(u)=\deg(u)$ with the boundary $\delta X$ ($|\delta X|\geq 1$), (which is the same as $\chi L\chi^*$ with $w(e)=1$.)
Let $N_{\delta X}$, $N_{\delta X;\ell,m}$ be the subsets of the spanning forests such that 
\begin{align*} 
N_{\delta X} &= \{ \text{$H$ is a spanning forest} \;|\; \\
& \qquad\qquad  \text{(i) $\omega(H)=|\delta X|$};\\
& \qquad\qquad \text{(ii) for any connected component $W$ in $H$, $|W \cap \delta X|= 1$}\}, 
\end{align*}
and for $u_\ell,u_m\in X\setminus \delta X$, 
\begin{align*} 
N_{\delta X;\ell,m} &= \{ \text{$H$ is a spanning forest} \;|\; \\
& \qquad\qquad  \text{(i) $\omega(H)=|\delta X|+1$};\\
& \qquad\qquad \text{(ii) for any connected component $W$ in $H$,} \\
& \qquad\qquad\qquad\qquad \text{if $W \cap \delta X\neq \emptyset$, then $|W\cap \delta X|=1$ and $W\cap \{u_\ell,u_m\}=\emptyset$, }\\
& \qquad\qquad\qquad\qquad \text{otherwise, $W\supset\{u_\ell,u_m\}$} \}, 
\end{align*}
respectively. 
Then we have 
\begin{align}
|N_{\delta X}| &= \det L_{\delta X} \label{eq:1}\\
&= (-1)^{|X|-|\delta X|} \sum_{H\in \mathcal{H}_Q(\Gamma; \delta X)} 4^{\omega(H)-|S(H)|-|\delta X|}(-2\kappa)^{|S(H)|}. \label{eq:2}\\
|N_{\delta X;\ell,m}| &= (-1)^{\ell+m}\det( (L_{\delta X;{p,q}})_{p\neq \ell,q\neq m} ) \label{eq:1'}\\
&= (-1)^{|X|-|\delta X|-1} \sum_{H\in \mathcal{H}_Q(\Gamma; \delta X; \ell,m)} 4^{\omega(H)-|S(H)|-|\delta X|-1}(-2\kappa)^{|S(H)|}. \label{eq:2'}
\end{align}
\end{corollary}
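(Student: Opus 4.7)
The plan is to specialize Theorem~\ref{thm:main} and Proposition~\ref{prop:iota12} to $z=0$, with $w\equiv -1$ and $V(u)=\kappa$, so that $\chi\mathcal{M}_V\chi^*=L_{\delta X}$. A direct computation gives $\pi(u)=-\kappa$, hence the modified potentials are $V_L=0$ and $V_Q=2\kappa$. Consequently, on the $L$-side every self-loop is assigned weight $w_L(0;e)=0$, while each original edge gets $w_L(0;e)=-w(e)=1$. The only elements of $\mathcal{H}_L(\Gamma;\delta X)$ that contribute to $\iota_1(0;\mathcal{H}_L)$ are therefore those with $S(H)=\emptyset$, and Definition~\ref{def:graphfactor} identifies these as precisely the spanning forests in $N_{\delta X}$, each carrying weight $1$. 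The same observation for the two-pointed family gives
\[\iota_1(0;\mathcal{H}_L(\Gamma;\delta X))=|N_{\delta X}|,\qquad \iota_2(0;\mathcal{H}_L(\Gamma;\delta X;\ell,m))=|N_{\delta X;\ell,m}|.\]

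The determinant identities (\ref{eq:1}) and (\ref{eq:1'}) then drop out quickly. For (\ref{eq:1}), the proof of Proposition~\ref{prop:iota12} asserts $\det\Theta_L=\iota_1(0;\mathcal{H}_L)$, which combines with the previous display to give $\det L_{\delta X}=|N_{\delta X}|$. For (\ref{eq:1'}), Theorem~\ref{thm:main} at $z=0$ yields $G_0(u_\ell,u_m)=|N_{\delta X;\ell,m}|/|N_{\delta X}|$; multiplying through by $\det L_{\delta X}$ and matching against the cofactor expression $(L_{\delta X}^{-1})_{\ell m}=(-1)^{\ell+m}\det((L_{\delta X})_{p\ne\ell,q\ne m})/\det L_{\delta X}$ delivers (\ref{eq:1'}).

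The odd-unicyclic formulas (\ref{eq:2}) and (\ref{eq:2'}) are obtained by transferring to the $Q$-side via Proposition~\ref{prop:iota12}. There the weights are $w_Q(0;e)=-1$ on $E(\Gamma)$ and $w_Q(0;e)=2\kappa$ on self-loops, and one must additionally carry the factor $4^{b_1(H\setminus S(H))}$ (and, for $\iota_2$, the sign $(-1)^{\dist_H(u_\ell,u_m)}$) appearing in Definition~\ref{def:weightgraph}. A straightforward partition of each $H\in\mathcal{H}_Q(\Gamma;\delta X)$ into its $|\delta X|$ boundary-rooted tree components and its OUC components — splitting the latter according to whether their unique cycle is a self-loop (the $|S(H)|$ of these) or a genuine odd cycle (the $b_1(H\setminus S(H))$ of these) — yields the counts
\[|E(H)\setminus S(H)|=|X|-|\delta X|-|S(H)|,\qquad b_1(H\setminus S(H))=\omega(H)-|\delta X|-|S(H)|.\]
Plugging these into $W_Q(0;H)=(-1)^{|E(H)\setminus S(H)|}(2\kappa)^{|S(H)|}$ and factoring out the global sign $(-1)^{|X|-|\delta X|}$ reproduces (\ref{eq:2}). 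The proof of (\ref{eq:2'}) is parallel, with the $\{u_\ell,u_m\}$-containing tree providing one extra non-boundary component (so $|\delta X|$ is replaced by $|\delta X|+1$) and the distance sign absorbed during the same sign factoring.

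The main obstacle is purely combinatorial: cleanly partitioning each $H\in\mathcal{H}_Q$ into the three component types above, recalling that no $\delta X$-vertex can carry a self-loop in $\SL{\Gamma}$, and verifying that the negative edge weight $w=-1$ combines with the factor $4$ per genuine odd cycle (and, in the two-pointed case, the sign $(-1)^{\dist_H(u_\ell,u_m)}$) to produce exactly the stated prefactors $(-1)^{|X|-|\delta X|}$ and $(-1)^{|X|-|\delta X|-1}$ with no unaccounted residual sign. Once the partition into component types and the edge/cycle count are in hand, the rest is mechanical substitution.
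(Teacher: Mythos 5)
Your overall route coincides with the paper's: specialize to $z=0$, $w\equiv-1$, $V=\deg$, note that the self-loop weight vanishes on the $L$-side so that $\iota_1(0;\mathcal{H}_L(\Gamma;\delta X))=|N_{\delta X}|$ and $\iota_2(0;\mathcal{H}_L(\Gamma;\delta X;\ell,m))=|N_{\delta X;\ell,m}|$, then transfer to the $Q$-side via Proposition~\ref{prop:iota12} using the counts $|E(H)\setminus S(H)|=|X|-|\delta X|-|S(H)|$ and $b_1(H\setminus S(H))=\omega(H)-|\delta X|-|S(H)|$. This correctly delivers (\ref{eq:1}), (\ref{eq:1'}) and (\ref{eq:2}), and matches the paper's own computation for those parts.

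The genuine gap is in your treatment of (\ref{eq:2'}). By Definition~\ref{def:weightgraph}, $\iota_2(0;\mathcal{H}_Q(\Gamma;\delta X;\ell,m))$ carries the per-term factor $(-1)^{\dist_H(u_\ell,u_m)}$, so the transfer via Proposition~\ref{prop:iota12} actually yields
\begin{equation*}
|N_{\delta X;\ell,m}|=(-1)^{|X|-|\delta X|-1}\sum_{H\in\mathcal{H}_Q(\Gamma;\delta X;\ell,m)}4^{\omega(H)-|S(H)|-|\delta X|-1}\,(-1)^{\dist_H(u_\ell,u_m)}\,(-2\kappa)^{|S(H)|},
\end{equation*}
and not the display in (\ref{eq:2'}). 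You assert that the distance sign is ``absorbed during the same sign factoring,'' but it cannot be: $\dist_H(u_\ell,u_m)$ is the length of the unique $u_\ell$--$u_m$ path inside the tree component of $H$ containing both vertices, and its parity varies with $H$ whenever $\Gamma$ is non-bipartite; even when $\Gamma$ is bipartite it is a constant $(-1)^{\dist_\Gamma(u_\ell,u_m)}$ that need not be $+1$ and is not accounted for in your prefactor $(-1)^{|X|-|\delta X|-1}$. So you must either keep $(-1)^{\dist_H(u_\ell,u_m)}$ inside the sum (which is what your own computation produces and what the $L$/$Q$ comparison actually proves), or supply an argument that it is identically $+1$, which is false in general. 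The paper's proof only works out (\ref{eq:1}) and (\ref{eq:2}) explicitly and dismisses (\ref{eq:1'}), (\ref{eq:2'}) as ``similar,'' so this is exactly the step that needed care rather than a one-line appeal to sign factoring.
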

\begin{proof}
By setting $w(e)=-1$ and $V(u)=\deg(u)$, $z=0$, the Green's function $\chi(\mathcal{M}_V-zI)\chi^*$ is of the form $L_{\delta X}$. 
In such a setting, the weights are 
\[
w_L(e;z):=\begin{cases} 1 & \text{: $e\in E$}\\ 0 & \text{: $e\in S_{\delta X}$} \end{cases},\;\;
w_Q(e;z):= \begin{cases} -1 & \text{: $e\in E$}\\ 2\kappa & \text{: $e\in S_{\delta X}$} \end{cases}
\]
Since the weight at the self-loop is $0$ for $J=L$, it is enough to see only the spanning forests in the subgraph $\mathcal{H}_{L}(\Gamma;\delta X)$, that is, 
\[ \det L_{\delta X}=\sum_{H\in \mathcal{H}_{L}(\Gamma;\delta X)} 1 = \sum_{H\in N_{\delta X}} 1 =|N_{\delta X}|, \]
which implies (\ref{eq:1}). 
Note that the number of edges except the self-loops in $H\in \mathcal{H}_Q(\Gamma;\delta X)$ is $|X|-|\delta X|-|S(H)|$, and the number of odd unicycle graphs in the connected components of $H\in \mathcal{H}_Q(\Gamma;\delta X)$ is $\omega(H)-|\delta X|-S_{\delta X}(H)$, then 
(\ref{eq:main2}) directly implies  
\[ \det L_{\delta X}=\sum_{H\in \mathcal{H}_Q(\Gamma; \delta X)} 4^{\omega(H)-|\delta X|-S_{\delta X}(H)}(2\kappa)^{S_{\delta X}(H)}(-1)^{|X|-|\delta X|-|S_{\delta X}(H)|}, \]
which leads (\ref{eq:2}).
The proof of (\ref{eq:1'}) and (\ref{eq:2'}) can be also done in a similar method. 
\end{proof}
Let us set 
\[\stackrel{\triangle}{\mathcal{T}}(\Gamma;\delta X):=\{ H\in \mathcal{H}_Q(\Gamma;\delta X)\;|\; \text{some component of $H$}\in \mathcal{OUC}(\Gamma) \}. \]
The family of the spanning subgraph  $\mathcal{H}_Q(\Gamma;\delta X)$ can be decomposed into 
\[\mathcal{H}_Q(\Gamma;\delta X)=\mathcal{H}_L(\Gamma;\delta X)\cup \stackrel{\triangle}{\mathcal{T}}(\Gamma;\delta X).\] 
Thus in the computation of (\ref{eq:main2}), we need an ``extra" computation  corresponding to the ``$\stackrel{\triangle}{\mathcal{T}}(\Gamma;\delta X)$" part comparing with (\ref{eq:main1}). We extract such part using the following expression. 
\begin{proposition}\label{prop:relation}
It holds that 
\begin{equation}\label{eq:ibu}
\det(\chi\mathcal{M}_V\chi^*-zI)-(-1)^{|X\setminus \delta X|}\det(\chi\mathcal{M}_{-V}\chi^*+zI)=\sum_{H\in \stackrel{\triangle}{\mathcal{T}}(\Gamma;\delta X) } 4^{b_1(H\setminus S(H))}W_Q(H;z). 
\end{equation}
\end{proposition}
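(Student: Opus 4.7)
The plan is to reduce both determinants on the left-hand side to sums over the spanning subgraph families from Definition~\ref{def:graphfactor}, using the identity $\det(\chi(\mathcal{M}_V-zI)\chi^*)=\iota_1(z;\mathcal{H}_J(\Gamma;\delta X))$ that sits inside the proof of Theorem~\ref{thm:main} (and that is already invoked in Proposition~\ref{prop:iota12}), and then to read off the difference from the disjoint decomposition $\mathcal{H}_Q(\Gamma;\delta X)=\mathcal{H}_L(\Gamma;\delta X)\sqcup\stackrel{\triangle}{\mathcal{T}}(\Gamma;\delta X)$ announced just above the statement. Observe in advance that for any $H\in\mathcal{H}_L(\Gamma;\delta X)$ no component lies in $\mathcal{OUC}(\Gamma)$, since each component is either a boundary-touching tree or a tree carrying a single self-loop, so the multiplier $4^{b_1(H\setminus S(H))}$ equals $1$ throughout $\mathcal{H}_L(\Gamma;\delta X)$.

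First I would rewrite the first determinant with the $J=Q$ side of Theorem~\ref{thm:main}, obtaining
\[\det(\chi(\mathcal{M}_V-zI)\chi^*)=\iota_1(z;\mathcal{H}_Q(\Gamma;\delta X))=\sum_{H\in\mathcal{H}_L(\Gamma;\delta X)}W_Q(z;H)+\sum_{H\in\stackrel{\triangle}{\mathcal{T}}(\Gamma;\delta X)}4^{b_1(H\setminus S(H))}W_Q(z;H).\]
The proposition therefore reduces to identifying the second determinant on the left-hand side with the first of the two sums above. Pulling the sign inside an $|X\setminus\delta X|$-square matrix gives
\[(-1)^{|X\setminus\delta X|}\det(\chi\mathcal{M}_{-V}\chi^*+zI)=\det(\chi(-M+V-zI)\chi^*),\]
which is a determinant of exactly the shape treated in Theorem~\ref{thm:main}, but now on the weighted graph with edge weight $w':=-w$ and the same potential $V$. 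Applying the $J=L$ side of Theorem~\ref{thm:main} with these new weights, one has $\pi_{w'}(u)=-\pi_w(u)$, hence $V_L^{w'}(u)=V(u)+\pi_{w'}(u)=V(u)-\pi_w(u)=V_Q(u)$, and therefore the subgraph weights become $w_L^{w'}(z;e)=-w'(e)=w(e)$ for $e\in E$ and $w_L^{w'}(z;e)=-z+V_Q(u)=w_Q(z;e)$ on the self-loop at $u$. In other words $W_L^{w'}(z;H)=W_Q(z;H)$ subgraph by subgraph, and since the family $\mathcal{H}_L(\Gamma;\delta X)$ is defined combinatorially (it does not see the edge weights), the $L$-version of Theorem~\ref{thm:main} delivers $\det(\chi(-M+V-zI)\chi^*)=\sum_{H\in\mathcal{H}_L(\Gamma;\delta X)}W_Q(z;H)$, which is precisely the required piece.

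Subtracting this from the splitting of the first determinant leaves exactly the triangle contribution $\sum_{H\in\stackrel{\triangle}{\mathcal{T}}(\Gamma;\delta X)}4^{b_1(H\setminus S(H))}W_Q(z;H)$, i.e.\ the right-hand side. The main obstacle I expect is the bookkeeping in the sign-flip step: one has to recognize that replacing $w$ by $-w$ swaps the $L$- and $Q$-weight conventions, which is what forces $w_L^{-w}$ to coincide on the nose with $w_Q$. Once this symmetry is in hand, the proposition is essentially a one-line comparison of the two decompositions.
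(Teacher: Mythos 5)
Your argument is correct and follows essentially the same route as the paper: both proofs rest on the identity $\det(\chi(\mathcal{M}_V-zI)\chi^*)=\iota_1(z;\mathcal{H}_J(\Gamma;\delta X))$, the disjoint decomposition $\mathcal{H}_Q(\Gamma;\delta X)=\mathcal{H}_L(\Gamma;\delta X)\sqcup\stackrel{\triangle}{\mathcal{T}}(\Gamma;\delta X)$ with $4^{b_1(H\setminus S(H))}=1$ on the $\mathcal{H}_L$ part, and the observation that a global sign flip interchanges the $L$- and $Q$-weight conventions. The only (cosmetic) difference is packaging: the paper records the sign flip as the per-subgraph identity $W_L(H;z;V)=(-1)^{|X\setminus\delta X|}W_Q(H;-z;-V)$ on the original graph, whereas you absorb it into the substitution $w\mapsto -w$ and reapply the $L$-side of Theorem~\ref{thm:main}; the computation is identical.
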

\begin{proof}
Let us set the weight of the subgraph $H$  with the potential $V$ be $W_J(H;z;V):=W_J(H;z)$ $(J\in \{L,Q\})$. 
If $H\in \mathcal{H}_L(\Gamma;\delta X)$, then 
\begin{align} W_L(H;z;V) 
&= \prod_{e\in E(H)\setminus S(H)} (-w(e))\cdot \prod_{e\in S(H)} (z+V(e)+\pi(e)) \notag\\ 
&= (-1)^{|X\setminus \delta X|} \prod_{e\in E(H)\setminus S(H)} w(e) \cdot \prod_{e\in S(H)} (-z-V(e)-\pi(e)) \notag\\
&=  W_Q(H;-z;-V). \label{eq:relationLQ}
\end{align}
Here we used $|E(H)|=|X\setminus \delta X|$ in the second equality. 
By Proposition~\ref{prop:iota12}, it holds that 
\begin{equation}\label{eq:detexpression}
\det(\chi\mathcal{M}_V\chi^*-zI)=\iota_1(\mathcal{H}_L(\Gamma;\delta X))=\iota_1(\mathcal{H}_Q(\Gamma;\delta X)). 
\end{equation}
Note that the family of the spanning subgraph  $\mathcal{H}_Q(\Gamma;\delta X)$ can be decomposed into 
\[\mathcal{H}_Q(\Gamma;\delta X)=\mathcal{H}_L(\Gamma;\delta X)\cup \stackrel{\triangle}{\mathcal{T}}(\Gamma;\delta X).\] 
Here $\iota_1(\mathcal{H}_L(\Gamma;\delta X))$ is 
\[ \iota_1(\mathcal{H}_L(\Gamma;\delta X))=\sum_{H\in \mathcal{H}_L(\Gamma;\delta X)} W_L(H;z;V) \]
by its definition, while 
\begin{align} \label{eq:iotaQ}
\iota_1(\mathcal{H}_Q(\Gamma;\delta X)) 
&= \sum_{H\in \mathcal{H}_L(\Gamma;\delta X)} W_Q(H;z;V) + \sum_{H\in \stackrel{\triangle}{\mathcal{T}}(\Gamma;\delta X)} 4^{b_1(H\setminus S(H))}W_Q(H;z;V).  
\end{align}
Then (\ref{eq:relationLQ}), (\ref{eq:detexpression}) and (\ref{eq:iotaQ}) imply 
\begin{align*}
\sum_{H\in \stackrel{\triangle}{\mathcal{T}}(\Gamma;\delta X)} 4^{b_1(H\setminus S(H))}W_Q(H;z;V) &= \iota_1(\mathcal{H}_Q(\Gamma;\delta X)) - \sum_{H\in \mathcal{H}_L(\Gamma;\delta X)} W_Q(H;z;V) \\
&= \det (\chi\mathcal{M}_V\chi^*-zI) - (-1)^{|X\setminus \delta X|}\sum_{H\in \mathcal{H}_L(\Gamma;\delta X)}W_L(H;-z;-V) \\
&= \det (\chi\mathcal{M}_V\chi^*-zI) - (-1)^{|X\setminus \delta X|}\det (\chi\mathcal{M}_{-V}\chi^*+zI),
\end{align*}
which is the desired conclusion. 
\end{proof}
Using this proposition, we find that the determinant of the signless Laplacian can be described by not only the odd cyclic factor~\cite{CRS2007} but also the spannnig forests as follows. 
\begin{corollary}\label{cor:tree->OUC}
Assume $\Gamma$ is a $\kappa$-regular graph. 
Let us consider the case for $\delta X=\emptyset$ and 
set $\mathcal{F}(\Gamma)$ as the family of the spanning forest of $\Gamma$ and 
\[\mathcal{OUCF}(\Gamma):=\{H\in \stackrel{\triangle}{\mathcal{T}}(\Gamma;\emptyset)\;|\; \text{every connected component of $H$ belongs to $\mathcal{OUC}(\Gamma)$}\}, \] 
which is the family of the odd unicylic factor of $\Gamma$. 
Then we have 
\begin{equation}\label{eq:acco}
\sum_{H\in \mathcal{OUCF}(\Gamma)} 4^{\omega(H)} = 
(-1)^{|X|} \sum_{H\in \mathcal{F}(\Gamma)}  (-2\kappa)^{\omega(H)}\Lambda(H), 
\end{equation}
where for the connected components of $H$, with $H=H_1\sqcup \cdots \sqcup H_{\omega(H)}$, we define  $\Lambda(H):=\prod_{j=1}^{\omega(H)} |X(H_j)|$. 
\end{corollary}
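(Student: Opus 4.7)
My plan is to apply Proposition~\ref{prop:iota12} with a judicious choice of weight, potential, and spectral parameter tailored to the $\kappa$-regular setting with $\delta X=\emptyset$, so that the $J=Q$ side collapses to the left-hand side of (\ref{eq:acco}) while the $J=L$ side produces the right-hand side. The key specialization will be $w(e)\equiv 1$, $V\equiv 0$, and $z=-\kappa$. Then $\pi(u)=\kappa$ for every vertex, so $V_Q(u)=-\kappa$ and the self-loop weight $w_Q(z;e)=-z+V_Q=\kappa-\kappa$ vanishes, while $w_Q(z;e)=1$ on original edges. Symmetrically $V_L(u)=\kappa$, and one reads off $w_L(z;e)=-1$ on $E$ together with $w_L(z;e)=2\kappa$ on self-loops.

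First I would evaluate $\iota_1(\mathcal{H}_Q(\Gamma;\emptyset))$: the vanishing of $w_Q$ on self-loops forces every $H\in\mathcal{H}_Q(\Gamma;\emptyset)$ containing a self-loop to contribute $0$, so only those subgraphs whose components all lie in $\mathcal{OUC}(\SL{\Gamma})$ and yet use no self-loop survive --- precisely $H\in\mathcal{OUCF}(\Gamma)$. For such an $H$ one has $W_Q(H;z)=1$ and $b_1(H\setminus S(H))=\omega(H)$, hence $\iota_1(\mathcal{H}_Q(\Gamma;\emptyset))=\sum_{H\in\mathcal{OUCF}(\Gamma)} 4^{\omega(H)}$, matching the LHS of (\ref{eq:acco}).

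Next I would evaluate $\iota_1(\mathcal{H}_L(\Gamma;\emptyset))$: every $H\in\mathcal{H}_L(\Gamma;\emptyset)$ has each connected component a tree carrying exactly one self-loop, so it is in bijection with a pair $(F,\varphi)$ where $F\in\mathcal{F}(\Gamma)$ is a spanning forest and $\varphi$ chooses, independently for each tree component of $F$, the vertex bearing the loop; for a fixed $F$ this produces $\prod_j |X(H_j)|=\Lambda(F)$ pairs. With $|E(H)\setminus S(H)|=|X|-\omega(F)$ and $|S(H)|=\omega(F)$, the weight is $W_L(H;z)=(-1)^{|X|-\omega(F)}(2\kappa)^{\omega(F)}=(-1)^{|X|}(-2\kappa)^{\omega(F)}$, independent of $\varphi$. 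Summing then gives $\iota_1(\mathcal{H}_L(\Gamma;\emptyset))=(-1)^{|X|}\sum_{F\in\mathcal{F}(\Gamma)}(-2\kappa)^{\omega(F)}\Lambda(F)$, which is the RHS of (\ref{eq:acco}).

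Invoking Proposition~\ref{prop:iota12} to equate the two expressions will finish the proof. The main conceptual step is picking the triple $(w,V,z)=(1,0,-\kappa)$: this single substitution simultaneously kills the self-loop contribution on the $Q$-side (isolating $\mathcal{OUCF}(\Gamma)$) and reduces the $L$-side to a clean sum over spanning forests weighted by $\Lambda$. Once that substitution is in place, the remaining bookkeeping for signs and loop placements is straightforward, and I do not anticipate a genuine obstacle beyond verifying the bijection between $\mathcal{H}_L(\Gamma;\emptyset)$ and pairs (forest, loop-vertex choice per component), which is transparent from the definition of $\SL{\mathcal{T}}$.
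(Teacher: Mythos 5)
Your proof is correct and follows essentially the same route as the paper's: your specialization $(w,V,z)=(1,0,-\kappa)$ yields exactly the same edge weights ($w_L=-1$ on $E$, $2\kappa$ on loops; $w_Q=+1$ on $E$, $0$ on loops) as the paper's choice $V(u)=\pi(u)$, $z=0$, and both arguments come down to equating the $L$- and $Q$-type expansions of the same determinant, with the loop-vertex-placement bijection giving $\Lambda(F)$ on the $L$ side. The only (harmless) difference is that you invoke Proposition~\ref{prop:iota12} directly, whereas the paper routes through Proposition~\ref{prop:relation} and the singularity of the Laplacian to discard the tree-with-one-loop terms that you instead kill immediately via the vanishing self-loop weight on the $Q$ side.
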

\begin{remark}
The RHS of (\ref{eq:acco}) takes a non-negative value. 
Note that the LHS is $0$ iff $\mathcal{OUCF}(\Gamma)=\emptyset$, that is, $\Gamma$ is bipartite. 
Then the RHS is $0$ iff $\Gamma$ is bipartite.
\end{remark}
\begin{proof}
Under the condition for $V(u)=\pi(u)$ and $z=0$, $\delta X=\emptyset$ and $w(e)=1$ for any $e\in E(\Gamma)$, the weight on $E(\SL{\Gamma})$ is reduced to 
\[w_L(z;e)=\begin{cases} -1 & \text{: $e\in E$,} \\
2\kappa & \text{: $e\in S_{\delta X}$,}
\end{cases}
\] 
while 
\[w_Q(z;e)=\begin{cases} +1 & \text{: $e\in E$,} \\
0 & \text{: $e\in S_{\delta X}$.}\end{cases} \]
The second term of LHS of (\ref{eq:ibu}) in Proposition~\ref{prop:relation} is vanished because the Laplacian is non-invertible. 
Note that the number of edges except the self-loops in $H\in \mathcal{H}_L(\Gamma;\emptyset)$ is $|X|-\omega (H)$. 
Then by (\ref{eq:detexpression}), RHS of (\ref{eq:ibu}) is expressed by
\begin{align}
\text{RHS of (\ref{eq:ibu})} &= 
\iota_1(\mathcal{H}_L(\Gamma;\delta X))
=\sum_{H\in \mathcal{H}_L(\Gamma;\emptyset)} (-1)^{|X|-\omega(H)} (2\kappa)^{\omega(H)}  \notag \\
&= \sum_{H\in \mathcal{F}(\Gamma)} (-1)^{|X|-\omega(H)} (2\kappa)^{\omega(H)} \Lambda(H). \label{eq:ibu2}
\end{align}
Moreover if $H\in \stackrel{\triangle}{\mathcal{T}}(\Gamma;\emptyset)$ has a self-loop, then $W_Q(H;z;V)=0$ by the setting of the potential. Thus the domain of the subgraphs for the summation in RHS of (\ref{eq:ibu}) is reduced to $\mathcal{OUCF}(\Gamma)\subset \stackrel{\triangle}{\mathcal{T}}(\Gamma;\emptyset)$.  Combining it with (\ref{eq:ibu2}), we obtain the desired conclusion.   
\end{proof}

\section{Proof of main theorem}
\subsection{Proof of (\ref{eq:main1})}
\begin{proof}
Let $\vec{E}\subset A$ be the set of arcs removing one of the symmetric arcs from every pair of $\{a,\bar{a}\}$.  
The directed graph, $\vec{\Gamma}=(X,\vec{E}\cup S_{\delta X})$, which is isomorphic to $\SL{\Gamma}$, is more suitable as the deformed graph considered here. Thus $|A|/2=|\vec{E}|$. 
Let the set of the self-loops $S_{\delta X}$ be $\{\sigma_u \;|\; u\in X\setminus \delta X\}$. 
The (directed) incident matrix of $\vec{\Gamma}$ is denoted by $B_L: \mathbb{C}^{\vec{E}\cup S}\to \mathbb{C}^{X}$ such that 
\[ (B_L\psi)(u)=\sum_{a\in \vec{E}\cup S\;;\;t(a)=u}\psi(a)-\sum_{a\in \vec{E}\cup S\;;\;o(a)=u}\psi(a) + \psi(\sigma_u) \]
for any $\psi\in \mathbb{C}^{\vec{E}\cup S}$ and $u\in X\setminus \delta X$. Then the adjoint $B_L^*: \mathbb{C}^{X}\to \mathbb{C}^{\vec{E}\cup S}$ is described by 
\[ (B_L^*f)(a)= \begin{cases}
f(t(a))-f(o(a)) & \text{: $a\in \vec{E}$,} \\
f(t(a)) & \text{: $a\in S$.}
\end{cases} \]
Note that the matrix representation of $B_L$ in $\mathbb{C}^{X\times (\vec{E}\cup S)}$ is described as follows: 
\[ (B_L)_{u,a}=\begin{cases}
+1 & \text{: $t(a)=u$, $a\in \vec{E}\cup S$,}\\
-1 & \text{: $o(a)=u$, $a\in \vec{E}$,}\\
0 & \text{: otherwise}
\end{cases}\]
for any $u\in X$, $a\in \vec{E}\cup S$. 
Let $D_{L}: \mathbb{C}^{\vec{E}\cup S}\to \mathbb{C}^{\vec{E}\cup S}$ be the multiplication operator such that 
\[ (D_{L}\psi) (a) = w_L(a)\psi(a)\]
which is the diagonal matrix in the representation in $\mathbb{C}^{(\vec{E}\cup S)\times (\vec{E}\cup S)}$, that is, 
\[ (D_{L})_{a,b}=\begin{cases} 
 w_L(z;a) & \text{: $a=b$,}\\ 0 & \text{: otherwise}\end{cases} \]
for any $a\in \vec{E}\cup S$. 

If we set $V'(u)=V(u)+\sum_{t(a)=u}w(|a|)$, 
the weighted Laplacian with the potential $V'$ is rewritten by 
\[ \mathcal{L}_{V'}-zI=B_LD_LB_L^*,  \]
which is equivalent to $\mathcal{M}_{V}-z I$. See Remark~\ref{rem:MLQ}.
Then we have 
\begin{align*}
G(z;\Gamma;\delta X; V) 
&= (\chi (\mathcal{L}_{V'}-zI)\chi^* )^{-1} \\
&= (\;B_L^{(\delta X)} D_L {B_L^{(\delta X)}}^* \;)^{-1},
\end{align*}
where $B_L^{(\delta X)}:=\chi B_L$. 
Set $\Theta_L:=\chi (\mathcal{L}_{V'}-zI)\chi^*$.
Assume $X\setminus \delta X=\{u_1,\dots,u_{N-q}\}$ ($0\leq q\leq N-1$).  
If $\Theta_L$ has the inverse, then 
\[ (\Theta_L^{-1})_{\ell,m}= (-1)^{\ell+m}\frac{\det \Theta_L^{(m,\ell)}}{\det{\Theta_L}}=(-1)^{\ell+m}\frac{\det \Theta_L^{(\ell,m)}}{\det{\Theta_L}} \]
for any $1\leq \ell,m\leq N-q$ 
since $\Theta_L$ is symmetric. 
Here $\Theta_L^{(\ell,m)}$ is the submatrix of $\Theta_L$ consisting of rows $X\setminus (\delta X\cup \{u_\ell\})$ and columns $X\setminus (\delta X \cup \{u_m\})$.  \\

\noindent{\bf $\star$ Spanning subgraphs induced by $\det \Theta_L$ and $\det \Theta_L^{(\ell,m)}$ }
\begin{enumerate}
\item Spanning subgraph for $\det \Theta_L$: 
By the Binet-Cauchy formula\footnote{Binet-Cauchy formula: for an $m\times n$ matrix $A=[\;\bs{x}_1\;|\;\bs{x}_2\;|\;\cdots|\bs{x_n}\;]$ and an $n\times m$ matrix $B=[\;\bs{y}_1\;|\;\bs{y}_2\;|\;\cdots\;|\;\bs{y}_n\;]^\top$ with $m\leq n$, \[ \det(AB)=\sum_{\{i_1,\dots,i_m\}\subset \{1,\dots,n\} } \det [\;\bs{x}_{i_1}\;|\;\bs{x}_{i_2}\;|\;\cdots\;|\;\bs{x}_{i_m}\;]\; \det [\;\bs{y}_{i_1}\;|\;\bs{y}_{i_2}\;|\;\cdots\;|\;\bs{y}_{i_m}\;]. \]}, 
we have 
\begin{align}
\det \Theta_L 
&= \det (B_L^{(\delta X)}D_L{B_L^{(\delta X)}}^*) = \sum_{H\subset \vec{E}\cup S,\; |H|=N-q}W_L(z;H)\det (B_L^{(\delta X)}(H))^2 \\
&= \sum_{H\subset \vec{\Gamma},\; |A(H)|=N-q}W_L(z;H)\det (B_L^{(\delta X)}(H))^2, \label{eq:L_N-q}
\end{align}
where $B_L^{(\delta X)}(H)$ is the $(N-q)\times (N-q)$ submatrix of $B_L^{(\delta X)}$ consisting of the columns corresponding to $H\subset \vec{E}\cup S$. 
Here we identify a subset $H$ of the arc set $A(\vec{\Gamma})=\vec{E}\cup S$ with the subgraph $H$ of $\vec{\Gamma}$. 
Remark that since there exists no row vectors in $B_L^{(\delta X)}$ corresponding to the vertices of $\delta X$, 
let us regard the vertices of $\delta X$ in the subgraph $H$ as the ``sinks".  

Now let us see that the following situations of the subgraph $H$ with $|A(H)|=N-q$ brings $\det B_L^{(\delta X)}(H)=0$. 
\begin{enumerate}
\item{\it ``The underlying undirected graph of $H$ contains a cycle whose length is greater than $2$"} $\Rightarrow \det(B_L^{(\delta X)})=0$. 
\begin{proof}
Assume $H$ contains a cycle whose length is greater than $2$. 
The arcs whose support edegs form a cycle in $H$ are denoted by $e_0,\dots,e_{s-1}\in \vec{E}$. 
Set $e_0',\dots,e_{s-1}'\in A$ so that each terminal vertex of an arc identifies with the origin vertex of the next arc; that is, 
\[ e_0'=e_0,\text{ and }
e_{j}'=\begin{cases} e_{j} & \text{: $o(e_{j})=o(e_{j-1})$} \\
\bar{e}_{j} & \text{: $o(e_j)\neq t(e_{j-1})$}
 \end{cases} \text{ for $j=1,\dots,s-1$},\]
where the subscript is modulus of $s$. 
The columns of $B_L^{(\delta X)}(H)$ corresponding to the cycle is denoted by  $\bs{b}_1\dots,\bs{b}_s$. Note that 
\[ \bs{b}_j(u)=\begin{cases} 1 & \text{: $t(a_j)=u$}\\ -1 & \text{: $o(a_j)=u$}\\ 0 & \text{: otherwise.} \end{cases}\]
We can see that $\bs{b}_j$' s are linearly dependent since 
\[ \sum_{j=1}^s(-1)^{\delta_{e_j, e_j'}}\bs{b}_j=0.\]
Then $\det B^{(\delta X)}_L(H)=0$. 
\end{proof}
\item{\it ``$H$ contains a connected component which has more than one element of $S\cup \delta X$" $\Rightarrow \det B^{(\delta X)}_L(H)=0$}
\begin{proof}
Assume the connected component which has more than one element of $u_\sigma\in S$ and $u_*\in \delta X$. There is a path connecting between $u_\sigma$ and $u_*$ in the support graph $H$ since $u_\sigma$ and $u_*$ are included in the same connected component. Set the arcs in $\vec{E}$ whose support edges form the path between $u_\sigma$ and $u_*$ by $e_1,\dots,e_s$. The sequence $(e_1',\dots,e_s')$ are defined by 
\[ e_1'=\begin{cases} e_1 & \text{: $o(e_1)=u_\sigma$,}\\ \bar{e}_1 & \text{: $t(e_1)=u_\sigma$.} \end{cases} \]
\[e_{j}'=
\begin{cases} e_{j} & \text{: $o(e_{j})=t(e_{j-1})$} \\
\bar{e}_{j} & \text{: $o(e_j)\neq t(e_{j-1})$}
 \end{cases} \]
The columns of $B_L^{(\delta X)}(H)$ corresponding the path and the self-loop is denoted by $\bs{b}_1,\dots,\bs{b}_s$, and $\bs{b}_0$, respectively. 
Note that 
\[\bs{b}_j(u)=
\begin{cases}
1 & \text{: $t(e_j)=u$, $0\leq j\leq s$, $u\neq u_*$} \\
-1 & \text{: $o(e_j)=u$, $1\leq j\leq s$, $u\neq u_*$} \\
0 & \text{: otherwise.}
\end{cases}\]
for any $u\in X\setminus \delta X$. 
Then $\bs{b}_j$'s are linearly dependent since
\[ \bs{b}_0+(-1)^{\delta_{e_1,e_1'}}\bs{b}_1+\cdots+(-1)^{\delta_{e_s,e_s'}}\bs{b}_s=0. \]
In a similar way, we can construct the linearly dependent column vectors for the rest of the cases; that is,  ``$u_\sigma\in S$, $u_*\in S$" ,  ``$u_\sigma\in \delta X$, $u_*\in \delta X$" cases. 
As a consequence we have $\det B_L^{(\delta X)}(H)=0$ if $H$ contains a connected component which has more than one element of $S\cup \delta X$. 
\end{proof}
\end{enumerate}
The observation (a) implies that
for a factor $H$ with $\det(H)\neq 0$, every connected component $W$ of $H$ must be a tree or a tree with self loops. Moreover by the observation (b), if $W\cap \delta X\neq \emptyset$, then $W$ must be a tree, $|W\cap \delta X|=1$; 
otherwise, that is, if $W\cap \delta X =\emptyset$, then $W$ is a tree or a tree with with exactly one-self-loop. 
On the other hand, the factor considered here must have $N-q-1$ arcs by (\ref{eq:L_N-q}). 
Remark that $|E(T)|=|V(T)|-1$ for every tree $T$, and $|E(C)|=|V(C)|$ for every graph having one and only one cycle $C$. Then if $W\cap \delta X=\emptyset$, $W$ must be a tree with exactly one self-loop. This implies that for a factor $H$ with $\det(H)\neq 0$, every connected component $W$ of $H$ satisfies $|W\setminus \delta X|=|A(W)|$. 
As a consequence, we have   
$H\in \mathcal{H}_L(\Gamma;\delta X)$. 
Then it holds that  
\[ \det \Theta_L 
= \sum_{H\in \mathcal{H}_L(\Gamma;\delta X) }W_L(z;H)\det (B_L^{(\delta X)}(H))^2. \]
\item Spanning subgraph for $\det \Theta_L^{(\ell,m)}$: 
By the Binet-Cauchy formula, we have 
\begin{align}
\det \Theta_L^{(\ell,m)} 
&= \det (B_L^{(\delta X\cup\{u_\ell\})}D_L{B_L^{(\delta X\cup\{u_m\})}}^*) \\
&= \sum_{H\subset \vec{E}\cup S,\; |H|=N-q-1}W_L(z;H)\det (B_L^{(\delta X\cup\{u_\ell\})}(H)) \det (B_L^{(\delta X\cup\{u_m\})}(H)) \\
&= \sum_{H\subset \vec{\Gamma},\; |A(H)|=N-q-1}W_L(z;H)\det (B_L^{(\delta X\cup\{u_\ell\})}(H)) \det (B_L^{(\delta X\cup\{u_m\})}(H))
\label{eq:L_N-q-1}
\end{align}
Here $B_L^{(\delta X \cup \{v\})}$ is the submatrix of $B_L^{(\delta X)}$ omitting the row vector corresponding to $v\in X$. 
The following situations of the subgraph $H$ with $|A(H)|=N-q-1$ brings $\det B_L^{(\delta X\cup\{u_k\})}(H)=0$ ($k\in\{\ell,m\}$) by a similar reason to the consideration for the case of $\det(B_L^{(\delta X)}(H))$:
\begin{enumerate}
\item{\it $H$ contains a cycle whose length more than $2$;}
\item{\it $H$ contains a connected component which has more than one element of $S\cup \delta X\cup \{u_k\}$.}
\end{enumerate}
Then by just replacing $\delta X$ with $\delta X \cup \{u_k\}$ in the discussion of (1) on $B^{(\delta X)}_L(H)$, 
this implies that 
for a factor $H$ with $\det(B_L^{\delta X\cup\{u_k\}})\neq 0$, every connected component of $H$, $W$, must be a tree with $|W\cap (\delta X\cup \{u_k\})|=1$ or a tree with exactly one-self-loop with $|W\cap (\delta X\cup \{u_k\})|=0$. 
This implies that 
\begin{equation}\label{eq:A(W)}
|A(W)|=|W\setminus (\delta X \cup \{u_k\})|\text{ $(k=\ell,m)$}. \end{equation}
Next, let us consider the situation of $H$ such that  \[\det B_L^{(\delta X\cup\{u_\ell\})}(H)\det B_L^{(\delta X\cup\{u_m\})}(H)\neq 0.\] 
Assume $u_\ell\neq u_m$. 
For a factor $H$ with $\det B_L^{(\delta X\cup\{u_\ell\})}(H)\neq 0$ and $\det B_L^{(\delta X\cup\{u_m\})}(H)\neq 0$, let us assume that $u_\ell$ and $u_m$ are included in distinct connected components of $H$, say, $W$ and $W'$, respectively. 
Since $|W \cap (\delta X\cup \{u_\ell\})|=1$ and $u_\ell\in W$, $W$ must satisfy $W\cap \delta X=\emptyset$. Then (\ref{eq:A(W)}) implies  
\[ |A(W)|=|W\setminus \{u_\ell\}|=|W\setminus \{u_m\}|. \]
which leads a contradiction to the assumption.   
Then $u_\ell,u_m$ must be included in the same connected component. 
This mean that $H\in \mathcal{H}_L(\Gamma;\delta X;\ell,m)$. 
Then we have 
\[ \det\Theta_L^{(\ell,m)}=\sum_{H\in \mathcal{H}_L(\Gamma;\delta X)}W_L(z;H) {\left(\det B_L^{(\delta X)}(H)\right)}^2.\]
\end{enumerate}
Summarizing the above, we can state that 
\begin{equation}\label{eq:tochuu}
G(z;\Gamma;\delta X;V)_{\ell,m}
=(-1)^{\ell+m}\frac{\sum_{H\in \mathcal{H}_L(\Gamma;\delta X;\ell,m)}W_L(z;H) \det B_L^{(\delta X\cup\{u_\ell\})}(H)\det B_L^{(\delta X\cup\{u_m\})}(H)}{\sum_{H\in \mathcal{H}_L(\Gamma;\delta X)}W_L(z;H) \det^2 B_L^{(\delta X)}(H)}. \end{equation}
Then the next task is the computation of each determinant in RHS of (\ref{eq:tochuu}). \\ 

\noindent {\bf $\star$ Computations of $\det^2(B^{(\delta X)}_L(H))$ and $\det B_L^{(\delta X\cup\{u_\ell\})}(H)\det B_L^{(\delta X\cup\{u_m\})}(H)$}

\begin{enumerate}
\item $\det^2 B_L^{(\delta X)}$: 
The spanning subgraph $H$ consists of  $s$ connected components such that 
\[ H=(\stackrel{\circ}{T}_1\cup \cdots \cup \stackrel{\circ}{T}_{s-q}) \cup (T_{N-q+1}\cup \cdots \cup T_N), \]
where $T_i$ is the tree component of $H$ having $u_i$  ($i=N-q+1,\dots, N$), and $\stackrel{\circ}{T}_j$ is the tree with one self-loop ($j=1,\dots, s-q$). 
Here $s\geq q$ and if $s=q$, then there is not a tree with self-loop in $H$.
Then $B_L^{(\delta X)}(H)$ can be also decomposed into 
\[ B_L^{(\delta X)}(H)\cong \left(B(\stackrel{\circ}{T}_1)\oplus \cdots \oplus B(\stackrel{\circ}{T}_{s-q})\right) \oplus \left(B({T}_{N-q+1})\oplus \cdots \oplus {B}({T}_{N})\right). \]
By taking the cofactor expansions on the row's corresponding to the leaves in each connected component recursively, we obtain 
\begin{equation*} 
\det B(T_i)^2=\det B(\stackrel{\circ}{T_j})^2=1 \;(i=1,\dots,s-q,\;j=N-q+1,\dots, N)  
\end{equation*}
Then we have 
\begin{equation}\label{eq:bunnbo}
\det{}^2 (B_L^{(\delta X)}(H))=1. 
\end{equation}
\item $\det B_L^{(\delta X\cup\{u_\ell\})}(H)\det B_L^{(\delta X\cup\{u_m\})}(H)$: 
If $u_\ell=u_m$, then by replacing $\delta X$ with $\delta X \cup \{u_\ell\}$ in the discussion of (1), we have 
\[ \det(B_L^{(\delta X\cup \{u_\ell\})}(H))^2=1. \]
So we assume $\ell<m$. 
The subgraph $H$ consists of $s$ connected components such that 
\[ H= T_{\ell,m}\cup (T_{N-q+1}\cup \cdots \cup T_N)\cup (\stackrel{\circ}{T_1}\cup\cdots \stackrel{\circ}{T}_{s-q-1}). \]
Here $T_{\ell,m}$ is the tree connected component having both $u_\ell$ and $u_m$. 
We set $T_{N-q+1}\cup \cdots \cup T_N=: T_{\delta X}$ and $\stackrel{\circ}{T_1}\cup\cdots \cup\stackrel{\circ}{T}_{s-q-1}=: \stackrel{\circ}{T}$. 
In a similar manner to the above case (1), we have 
\[ |\det B_L^{(\delta X\cup\{u_\ell\})}(H)\det B_L^{(\delta X\cup\{u_m\})}(H)|=1. \]

From now on, we will show that this signature is $(-1)^{\ell+m}$. 
The matrix $B_L^{(\delta X \cup \{u_\xi\})}$ is the submatrix of $B_L^{(\delta X)}(H)$ by the deletion of the $\xi$-th row ($\xi\in \{\ell,m\}$).
By some appropriate permutations $\Sigma$ on the columns of $B_L^{(\delta X)}(H)$ and some permutations 
$\sigma_\ell$ and $\sigma_m$ on the rows of $B_L^{(\delta X\cup\{u_\ell\})}(H)$ and $B_L^{(\delta X\cup\{u_m\})}(H)$, respectively, we can express $B_L^{(\delta X \cup \{u_\xi\})}(H)$ as  
\begin{equation}\label{eq:bunkatsu} 
\begin{bmatrix}
B(T_{\ell,m}\setminus \{u_\xi\}) &  &   \\  & B(T_{\delta X}) & \\ & & B(\stackrel{\circ}{T})
\end{bmatrix},
\;\;(\xi\in \{ \ell,m \})
 \end{equation}
where $B(T_{\ell,m}\setminus\{u_\xi\})$ is the submatrix of the incidence matrix of $T_{\ell,m}$ consisting of columns labeled by $E(T_{\ell,m})$ and rows labeled by  $V(T_{\ell,m})\setminus (\delta X \cup \{u_\xi\})$; 
$B(T_{\delta X})$ is the submatrix of the incidence matrix of $T_{\delta X}$ consisting of columns labeled by $E(T_{\delta X})$ and the rows labeled by $V(T_{\delta X})\setminus \delta X$; 
$B(\SL{T})$ is the submatrix of the incidence matrix of $\SL{T}$ consisting of columns labeled by $E(\SL{T})$ and the rows labeled by $V(\SL{T})$.  
Here we may assume the two permutations $\sigma_\ell$ and $\sigma_m$ are expressed as 
\begin{align*}
\sigma_\ell &= \sigma' (1,2)(2,3)\cdots (m-2,m-1), \\
\sigma_m &= \sigma' (1,2)(2,3)\cdots (\ell-1,\ell),
\end{align*}
where $(i,j)$ is the transposition between $i$ and $j$, and $\sigma'$ is an appropriate permutation on $\{2,\dots,N-q-1\}$. 
We should remark that the $(m-1)$-th row of $B_L^{(\delta X \cup \{u_\ell\})}$ corresponds to $u_m$, while the $\ell$-th row of $B_L^{(\delta X \cup \{u_m\})}$ corresponds to $u_\ell$.
In addition, $\sigma_\ell^{-1}(1)$ and $\sigma_m^{-1}(1)$ correspond to the vertices $u_m$ and $u_\ell$, respectively, for $2\leq j\leq N-q-1$, 
$\sigma_\ell^{-1}(j)$ and $\sigma_\ell^{-1}(j)$ correspond to the same vertex of $H\setminus\{u_m,u_\ell\}$. 
Then $\mathrm{sgn}(\sigma_\ell)=(-1)^{m-\ell-1}\mathrm{sgn}(\sigma_m)$.
Using these, we have  
\begin{align} 
\det B_L^{(\delta X\cup\{u_\ell\})}(H)\;&\det B_L^{(\delta X\cup\{u_m\})}(H) \notag \\
&=\mathrm{sgn}(\Sigma)\;\mathrm{sgn}(\sigma_\ell)\;\det B(T_{\ell,m}\setminus \{u_\ell\})\det B(T_{\delta X}) \det B(\SL{T}) \notag \\ 
&\qquad\qquad \times \mathrm{sgn}(\Sigma)\;\mathrm{sgn}(\sigma_m)\;\det B(T_{\ell,m}\setminus \{u_m\}) \; \det B(T_{\delta X}) \det B(\SL{T}) \notag \\
&=(-1)^{m-\ell-1} \det B(T_{\ell,m}\setminus\{u_\ell\})\; \det B(T_{\ell,m}\setminus\{u_m\}). \label{eq:BmBl0}
\end{align}
Put $\tau=|E(T_{\ell,m})|$ which is the matrix size of  $B_{\ell}:=B(T_{\ell,m}\setminus\{u_\ell\})$ and $B_{m}:=B(T_{\ell,m}\setminus\{u_m\})$. 
Let $\bs{b}_1^*$ be the $1$-st row vector of $B_\ell$ and $\bs{b_j}^*$ be the $(j-1)$-th row vectors of $B_m$ for $j=2,\dots,\tau+1$. 
Then 
\[ [\;\bs{b}_1\;|\;\bs{b}_2\;|\;\cdots\;|\;\bs{b}_{\tau+1}\;]^* \]
is just the oriented incidence matrix for $T_{\ell,m}$. 
Then $\bs{b}_1^*+\cdots+\bs{b}_{\tau+1}^*=0$. 
Thus we have 
\begin{align}
\det B_m &= \det[\;\bs{b}_1\;|\;\bs{b}_3\;|\;\cdots\;|\;\bs{b}_{\tau+1}\;]^* \notag \\
&= \det[\;-(\bs{b}_2+\cdots+\bs{b}_{\tau+1})\;|\;\bs{b}_3\;|\;\cdots\;|\;\bs{b}_{\tau+1}\;] \notag \\
&= -\det[\;\bs{b}_2\;|\;\bs{b}_3\;|\;\cdots\;|\;\bs{b}_{\tau+1}  \;] \notag \\
&= -\det B_\ell \label{eq:BmBl}
\end{align}
By combining (\ref{eq:BmBl0}) with (\ref{eq:BmBl}), the final formula is reduced to  the following quite simple form:
\begin{equation}\label{eq:bunshi}
\det B_L^{(\delta X\cup\{u_\ell\})}(H)\det B_L^{(\delta X\cup\{u_m\})}(H)=(-1)^{\ell+m}  \end{equation}
for any $H\in \mathcal{H}_L(\Gamma;\delta X;\ell,m)$. 
\end{enumerate}
Inserting (\ref{eq:bunnbo}) and (\ref{eq:bunshi}) into (\ref{eq:tochuu}), we reach to 
\[
G(z;\Gamma;\delta X;V)_{\ell,m}
=\frac{\sum_{H\in \mathcal{H}_L(\Gamma;\delta X;\ell,m)}W_L(z;H) }{\sum_{H\in \mathcal{H}_L(\Gamma;\delta X)}W_L(z;H) }, \]
which is the desired conclusion. 
\end{proof}
\subsection{Proof of (\ref{eq:main2})}
\begin{proof}
Instead of $\Gamma$, we consider the new graph which has the new self-loop to every vertex of $X\setminus \delta X$. Such graph is denoted by $\stackrel{\circ}{\Gamma}=(X,\stackrel{\circ}{E})$ with $ \stackrel{\circ}{E}=E\cup S_{\delta X}$, where $S_{\delta X}\cong X\setminus\delta X$ is the set of the new self-loops. 

Let $B_Q: \mathbb{C}^{\stackrel{\circ}{E}}\to \mathbb{C}^{X}$ be the edge matrix such that 
\[ (B_Q\psi)(u)=\sum_{u\in X\;:\; u\in e}\psi(e), \]
where $u\in e"$ is an end vertex of an undirected edge $e\in \SL{E}$. 
The adjoint $B_Q^*: \mathbb{C}^{X}\to \mathbb{C}^{\stackrel{\circ}{E}}$ is described by 
\[ (B_Q^*\phi)(e)=\begin{cases}
\phi(u)+\phi(v) & \text{: $e=\{u,v\}\in E$}\\
\phi(u) & \text{: $e=\{u\}\in S$}
\end{cases} \]
The diagonal matrix $D_Q$ on $\mathbb{C}^{\stackrel{\circ}{E}}$ is defined by
\[ (D_Q)_{a,b}=\begin{cases}
w_Q(z;a) & \text{: $a=b$} \\
0 & \text{: otherwise}
\end{cases} \]
Then we have 
\[ Q= B^{(p)}\mathcal{K}{B^{(p)}}^{*}.  \]
If we set $V''(u)=V(u)-\sum_{a\in A\;|\; t(a)=u}w(|a|)$, 
then the weighted signless Laplacian with the potential $V''$ is rewritten by 
\[ \mathcal{Q}_{V''}-zI = B_QD_QB_Q^{*} \]
which is equivalent to $\mathcal{M}_V-zI$. See Remark~\ref{rem:MLQ}. Then we have 
\[ G(z;\Gamma;\;\delta X;\;V) = (\chi(\mathcal{Q}_{V''}-zI)\chi_*)^{-1} =(B_Q^{(\delta X)} D_Q {B_Q^{(\delta X)}}^* )^{-1},\]
where $B_Q^{(\delta X)}:= \delta B_Q$. Set $\Theta_Q:=\chi 
(\mathcal{Q}_{V''}-zI) \chi^*$ and assume $X\setminus \delta X=\{u_1,\dots,u_q\}$ $(0\leq q\leq N-1)$. 
If $\Theta_Q$ has the inverse, then 
\[ (\Theta^{-1}_Q)_{\ell,m}= (-1)^{\ell+m}\frac{\det \Theta_Q^{(m,\ell)}}{\det{\Theta_Q}}=(-1)^{\ell+m}\frac{\det \Theta_Q^{(\ell,m)}}{\det{\Theta_Q}} \]
for any $1\leq \ell,m\leq N-q$ 
since $\Theta_Q$ is symmetric. 
Here $\Theta^{(\ell,m)}_Q$ is the submatrix of $\Theta_Q$ consisting of rows $X\setminus (\delta X\cup \{u_\ell\})$ and columns $X\setminus (\delta X \cup \{u_m\})$.  \\

\noindent{\bf $\star$ Spanning subgraphs induced by $\det \Theta_Q$ and $\det(\Theta^{(\ell,m)}_Q)$ }
\begin{enumerate}
    \item Spanning subgraph for $\det \Theta_Q$: 
    By the Binet-Cauchy formula, we have
    \begin{align}
        \det \Theta_Q &= \det (B^{(\delta X)}_Q D_Q {B^{(\delta X)}_Q}^*) 
        = \sum_{ \scriptsize{H\subset E\cup S,\;|H|=N-q }} W_Q(z;H)\det (B^{(\delta X)}_Q(H))^2 \\
        &= \sum_{ \scriptsize{H\subset \stackrel{\circ}{\Gamma},\;|E(H)|=N-q }} W_Q(z;H)\det (B^{(\delta X)}_Q(H))^2, \label{eq:Q_N-q}
    \end{align}
    where $B^{(\delta X)}(H)$ is the $(N-q)\times (N-q)$ submatrix of $B_Q^{(\delta X)}$ consisting of the columns $H\subset \stackrel{\circ}{E}=\cup S_{\delta X} $ of $B^{(\delta X)}$. 
    Here we identify a subset $H$ of the edge set with the subgraph $H$ of $\stackrel{\circ}{\Gamma}$. The vertices of $\delta X$ in the subgraph $H$ is regarded as the sinks. 
    
    Let us see that the following situations of the subgraph $H$ brings  $\det(B^{(\delta X)}_Q(H))=0$. 
    \begin{enumerate}
        \item {\it ``$H$ contains an even cycle"$\Rightarrow \det(B^{(\delta X)}_Q(H))=0$}.  
        \begin{proof}
        Let us denote the even cycle by the sequence of edges in $H$, $(e_1,e_2,\dots,e_{2m})$, and also 
        denote the columns of $B^{(\delta X)}_Q(H)$ corresponding to the even cycle by $\bs{c}_{e_1},\dots,\bs{c}_{e_{2s}}$. Note that $\bs{c}_{e_j}(u)=1$ if $u\in e_{j}$ and $\bs{c}_{e_j}(u)=0$ if $u\notin e_{j}$.  
        Then 
        \[\sum_{j=1}^{2s}(-1)^j\bs{c}_{e_j}=0.\] 
        This means $\bs{c}_{e_j}$'s are linearly dependent which implies $\det B_Q^{(\delta X)}(H)=0$.
        \end{proof}  
        \item{\it $H$ contains a connected component which has  more than one  element of the following set: $\{\text{odd cycles of $\stackrel{\circ}{\Gamma}$}\} \cup \delta X$. Here we regard the self-loop as an odd cycle of length one, then $\det B_Q^{(\delta X)}(H)=0$. } 
        \begin{proof}
        Let us consider the cases that an connected component includes at least one of the following subgraphs: 
        (i) odd cycle--odd cycle; 
        (ii) odd cycle -- $u_*$;
        (iii) $u_*$--$v_*$. 
        Here $u_*,v_*\in \delta X$. 
        Let us consider a typical example for case (i). Assume that the connected component has two odd cycle, $(e_1,\dots,e_{2s+1})$, and another odd cycle, $(f_1,\dots,f_{2\tau+1})$ with $s,\tau>1$, which are connected by a path, $(p_1,\dots,p_r)$ in $H$. Suppose there is a path between them, and $e_1\cap p_1,\;p_r\cap f_1 \neq \emptyset$. Then 
        \[ \sum_{j=1}^{2s+1}(-1)^j\bs{c}_{e_j}-2\sum_{j=1}^{r}(-1)^{j}\bs{c}_{p_j}+(-1)^r\sum_{j=1}^{2\tau+1}(-1)^j\bs{c}_{f_j}=0. \]
        This implies $\det(B_H^{(p)})=0$.
        The other cases can be proven in a similar fashion. 
        \end{proof}
    \end{enumerate}
    Then the above observations (a) and (b) imply that  
    by extending the self-loops $S$ to odd cycles in the discussion of Section~4.1 on $B_L^{(\delta X)}$,  for a factor $H$ with $\det B_Q^{(\delta X)}(H)\neq 0$, every connected component $W$ of $H$ must be a tree with $|W\cap \delta X |=1$ or an odd unicyclic graph with $|W\cap \delta X |=0$. 
    This leads $H\in \mathcal{H}_Q(\Gamma;\delta X)$. 
    Then we have 
    \begin{equation}\label{eq:detQ1}
        \det \Theta_Q= \sum_{H\in \mathcal{H}_Q(\Gamma;\delta X)} W_Q(z;H) \det (B_Q^{(\delta X)}(H) )^2. 
    \end{equation}
    
    
    \item Spanning subgraph for $\det(\Theta^{(\ell,m)}_Q)$: 
     By the Binet-Cauchy formula, we have
    \begin{align}
        \det \Theta_Q^{(\ell,m)} &= \det (B_Q^{(\delta X\cup\{u_\ell\})}V_Q {B_Q^{(\delta X\cup \{u_m\})}}^*) \\
        &= \sum_{ \scriptsize{H\subset \stackrel{\circ}{E},\;|H|=N-q-1 }} W_Q(z;H)\det B^{(\delta X\cup\{u_\ell\})}_Q(H) \det B^{(\delta X\cup\{u_m\})}(H)) \\
        &= \sum_{ \scriptsize{H\subset \stackrel{\circ}{\Gamma},\;|E(H)|=N-q-1 }} W_Q(z;H)\det B^{(\delta X\cup\{u_\ell\})}_Q(H) \det B^{(\delta X\cup\{u_m\})}_Q(H) \label{eq:Q_N-q-1}
    \end{align}
    where $B_Q^{(\delta X \cup\{u_k\})}$ ($k\in\{\ell,m\}$) is the $(N-q-1)\times |\stackrel{\circ}{E}|$ submatrix of $B^{(\delta X)}$ consisting of rows labeled by vertices in $X\setminus (\delta X\cup \{u_\ell\})$,   
    and $B^{(\delta X \cup \{u_k\})}(H))$ is the $(N-q-1)\times (N-q-1)$ submatrix of $B^{(\delta X\cup\{u_k\})}$ consisting of the columns $H\subset \stackrel{\circ}{E}$. 
    Note that we can regard the subedges set of $H$ as the subgraph of $\SL{\Gamma}$. 
    
    The following situations of the subgraph $H$ brings  $\det(B^{(\delta X \cup\{u_k\})}_{E(H)})=0$ $(k\in\{\ell,m\})$ by a similar reason to the consideration for the case of $\det(B^{(\delta X)}_{E(H)})=0$:
    \begin{enumerate}
    \item {\it $H$ has an even cycle;} 
    \item {\it $H$ contains a connected component which has  more than one element of the following set: $\{\text{odd cycles of $\stackrel{\circ}{\Gamma}$}\} \cup \{u_k\}\cup \delta X $.} 
    \end{enumerate}
\end{enumerate}
Then replacing $\delta X$ with $\delta X \cup \{u_k\}$ in the above discussion of (1) on $\det B_Q^{(\delta X)}$, we conclude that for a factor $H$ with $\det(B^{(\delta X\cup\{u_k\})}_{Q}(H))\neq 0$, every connected component $W$ of $H$ must be a tree with 
$|W\cap (\delta X \cup \{u_k\})|=1$ or an odd unicyclic graph. 
Next, let us consider the situation of $H$ such that  $\det(B^{(\delta X\cup \{u_\ell\})}_Q(H))\det(B^{(\delta X \cup \{u_m\})}_Q(H))\neq 0$. 
By extending self-loop $S$ to odd unicycles in the discussion Section~4.1 on the situation of $H$ satisfying $\det(B^{(\delta X\cup \{u_\ell\})}_L(H))\det(B^{(\delta X \cup \{u_m\})}_L(H))\neq 0$, it is easily to see that 
$u_\ell$ and $u_m$ must be included in the same tree.  Such a factor $H$ belongs to  $\mathcal{H}_Q(\Gamma;\delta X;\ell,m)$. 
Then we have 
\begin{equation}\label{eq:detQ2}
    \det \Theta^{(\ell,m)}_Q= \sum_{H\in \mathcal{H}_Q(\Gamma;\delta X;\ell,m)} W_Q(z;H) \det B^{(\delta X\cup \{u_\ell\})}_Q(H) \det B^{(\delta X\cup \{u_m\})}_Q(H). 
\end{equation} 

\noindent {\bf $\star$ Computations of $\det(B^{(\delta X)}_Q(H))^2$ and $\det(B^{(\delta X \cup\{u_\ell\})}_Q(H))\det(B^{(\delta X\cup\{u_m\})}_Q(H))$}
\begin{enumerate}
    \item $\det(B^{(\delta X)}_{E(H)})^2$:
    The spanning subgraph $H$ is decomposed into 
    \[ H= (\stackrel{\triangle}{T}_1 \cup \cdots \cup  \stackrel{\triangle}{T}_{s}) \cup (\stackrel{\circ}{T}_1 \cup \cdots \cup  \stackrel{\circ}{T}_{s'}) \cup (T_{N-q+1}\cup\cdots\cup T_N),  \]
    where $T_j$ $(j=N-q+1,\dots,N)$ is a tree including exactly one vertex $u_j$ in $\delta X$, $\stackrel{\triangle}{T}_j$ is an odd unicyclic graph and $\stackrel{\circ}{T}_j$ is a tree with one self-loop. 
    Then $B_Q^{(\delta X)}(H)$ can be also decomposed into 
    \begin{multline*} B_Q^{(\delta X)}(H)\cong \left( B(\stackrel{\triangle}{T}_1) \oplus \cdots \oplus  B(\stackrel{\triangle}{T}_s)\right) \oplus \left(B(\stackrel{\circ}{T}_1) \oplus \cdots \oplus  B(\stackrel{\circ}{T}_{s'})\right)
    \\ \oplus
    \left( B(T_{N-q+1})\oplus \cdots \oplus B(T_N) \right)  
    \end{multline*}
    by an appropriate permutation. 
    By taking the cofactor expansions on the rows corresponding to the leaves, recursively, we have
    \[ \det (B(T_i))^2=\det(B(\stackrel{\circ}{T}_j))^2=1.\;(i=N-q+1,\dots,N,\;j=1,\dots,s') \]
    On the other hand, since the determinant of the edge matrix of the odd cycle except the self-loop is $\pm 2$, then by taking the cofactor expansions on the rows corresponding to the leaves, recursively, we have
    \[ \det(B(\stackrel{\triangle}{T}_j))^2=4\;\;\;(j=1,\dots,s). \]
    Then by (\ref{eq:detQ1}), we have 
    \begin{equation}\label{eq:detQ1f} 
    \det \Theta_Q= \sum_{H\in \mathcal{H}_Q(\Gamma;\delta X)}  4^{b_1(H\setminus S(H))}W_Q(z;H)=:\iota_1(z;\mathcal{H}_Q(\Gamma;\delta X)). 
    \end{equation}
    \item $\det(B^{(\delta X \cup \{u_\ell\})}_Q(H))\det(B^{(\delta X\cup\{u_m\})}_Q(H))$:
If $u_\ell=u_m$, then by replacing $\delta X$ with $\delta X \cup \{u_\ell\}$ in the discussion of (1), we have 
\[ \det(B_Q^{(\delta X\cup \{u_\ell\})}(H))^2=4^{b_1(H\setminus S(H))}. \]
Now we only have to discuss the case for $\ell<m$ as in the case in Sect.4.1.     
The subgraph is decomposed into 
\[ H= T_{\ell,m} \cup \left(T_{N-q+1}\cup \cdots \cup T_{N}\right) \cup (\stackrel{\triangle}{T}_1 \cup \cdots \cup  \stackrel{\triangle}{T}_{s}) \cup (\stackrel{\circ}{T}_1 \cup \cdots \cup  \stackrel{\circ}{T}_{s'}) \]
Here $T_{\ell,m}$ is the tree including $u_\ell$ and $u_m$. 
We set $T_{\delta X}:=T_{N-q+1}\cup \cdots \cup T_{N}$ and $\text{OUCs}:=(\stackrel{\triangle}{T}_1 \cup \cdots \cup  \stackrel{\triangle}{T}_{s}) \cup (\stackrel{\circ}{T}_1 \cup \cdots \cup  \stackrel{\circ}{T}_{s'})$. 
In a similar manner to the case of $\det B^{(\delta X)}_Q(H)$, we have \[|\det B^{(\delta X \cup \{u_\ell\})}_Q(H) \det B^{(\delta X\cup\{u_m\})}_Q(H)|=4^{b_1(H\setminus S(H))}.\] 
The matrix $B_Q^{(\delta X \cup \{u_\xi\})}(H)$ is the submatrix of $B_Q^{(\delta X)}(H)$ by the deletion of the $\xi$-th row $\xi\in \{\ell,m\}$. 
By some permutation $\Sigma$ on the columns of $B_Q^{(\delta X)}(H)$ and some appropriate permutations $\sigma_\ell$ and $\sigma_m$ on rows of $B_Q^{(\delta X \cup \{u_\ell\})}(H)$ and $B_Q^{(\delta X \cup \{u_m\})}(H)$, respectively, we can express $B_Q^{(\delta X \cup \{u_\xi\})}(H)$ as 
\[  
\begin{bmatrix}
B(T_{\ell,m}\setminus\{u_\xi\}) & & \\
 & B(T_{\delta X}) & \\
 & & B(\text{OUCs}) 
\end{bmatrix},\;(\xi\in\{\ell,m\}) \]
respectively. 
 Here $B(T_{\ell,m}\setminus \{u_\xi\})$ is the submatrix of the incidence matrix consisting of columns labeled by $E(T_{\ell,m})$ and rows labeled by $V(T_{\ell,m})\setminus (\delta X \cup \{u_\xi\})$; $B(\mathrm{OUCs})$ is the incidence matrix of $\mathrm{OUCs}$ consisting of columns labeled by $E(\mathrm{OUCs})$ and the rows labeled by $V(\mathrm{OUCs})$. 
As in the case in Sect.~4.1 we may assume that the two permutations $\sigma_\ell$ and $\sigma_m$ are expressed as 
\begin{align*}
\sigma_\ell=\sigma''(1,2)(2,3)\cdots (m-2,m-1), \\
\sigma_m=\sigma''(1,2)(2,3)\cdots (\ell-1,\ell),
\end{align*}
where $\sigma''$ is an appropriate permutation on $\{2,\dots,N-q-1\}$. 
In a similar way to getting (\ref{eq:BmBl0}), we have 
\begin{multline}
    \det B^{(\delta X\cup\{u_\ell\})}_Q(H) \det B^{(\delta X\cup\{u_m\})}_Q(H) \\
    =  (-1)^{m-\ell-1}\det B(T_{\ell,m}\setminus\{u_\ell\})\;\det B(T_{\ell,m}\setminus\{u_m\}). \label{eq:permutation}    
\end{multline}
In the following, let us determine the signature of $\det(B(T_{\ell,m}\setminus\{u_\ell\}))\det(B(T_{\ell,m}\setminus\{u_m\}))$ because its absolute value is $1$. 
Put $B(T_{\ell,m}\setminus\{u_\ell\})=:B_\ell$ and 
$B(T_{\ell,m}\setminus\{u_m\})=:B_m$ in short. 
Let $\bs{b}_1$ be the first column vector of $B_\ell^*$ and $\bs{b}_j$ ($j=2,\dots,|V(T_\ell,m)|$) be the $(j-1)$-th column vector of $B_m^*$. 
Since $T_{\ell,m}$ is a tree, the vertex set of $T_{\ell,m}$ can be decomposed into the partite set $X$ and $Y$ so that every edge connects a vertex in $X$ to a vertex in $Y$. 
The vertex set of $T_{\ell,m}$ is labeled by $i_1,i_2,\dots,i_{|V(T_{\ell,m})|}$ with $i_1=u_\ell$ and $i_2=u_m$. 
We define an oriented incidence matrix $C$ of $T_{\ell,m}$ by 
\[ C^*=[\bs{c}_1 | \bs{c}_2 | \cdots | \bs{c}_{V(|T_{\ell,m}|)} ], \]
where 
\[\bs{c}_j(e)=\begin{cases} 1 & \text{: $i_j\in e$ and $i_j\in X$} \\
-1 & \text{: $i_j\in e$ and $i_j\in Y$}\\
0 & \text{: otherwise}
\end{cases}  \]
for any $e\in E(T_{\ell,m})$. 
Let $C_\xi$ $(\xi\in \{\ell,m\})$ be the submatrix of $C$ consisting of rows $V(T_{\ell,m})\setminus \{u_\xi\}$.  
By using the diagonal matrix $D_\xi$ $(\xi \in \ell,m)$ on $V(T_{\ell,m})\setminus\{u_\xi\}$ such that 
\[ D_\xi(u)=\begin{cases} 1 & \text{: $u\in X$,} \\ -1 & \text{: $u\in Y$,} \end{cases} \]
we have $C_\xi=D_\xi B_\xi$ $(\xi\in\{\ell,m\})$.  Since the matrix $C$ is a zero sum matrix, by the definitions of the permutations $\sigma_\ell$ and $\sigma_m$,   
\[ \det C_m=-\det C_\ell  \]
holds. 
Then we have 
\begin{align*}
    \det B_\ell &= \det(D_\ell)\det(C_\ell) \\ &= (-1)^{|Y|}(-1)^{\bs{1}_Y(u_\ell)} \det(C_\ell) \\
    \det B_m &= \det(D_m)\det(C_m) \\ &= (-1)^{|Y|}(-1)^{\bs{1}_Y(u_m)} \det(C_m) \\
    &= -(-1)^{|Y|}(-1)^{\bs{1}_Y(u_m)} \det (C_\ell).
\end{align*}
Therefore 
\begin{align}
    \det B_\ell \det B_m 
    &= -(-1)^{\bs{1}_Y(u_\ell)+\bs{1}_{Y}(u_m)} \det (C_\ell)^2 \notag\\
    &= -(-1)^{\dist_H(u_\ell,u_m)}. \label{eq:sign}
\end{align}   
Inserting (\ref{eq:sign}) into (\ref{eq:permutation}), we obtain
\[ \det B^{(\delta X\cup\{u_\ell\})}_Q(H)\; \det B^{(\delta X \cup\{u_m\})}_Q (H)
    =  4^{b_1(H\setminus S)} (-1)^{\dist_H(u_\ell,u_m)}\times (-1)^{m-\ell}.  \]
Thus by (\ref{eq:detQ2}), we obtain 
\begin{align}\label{eq:detQ2f}
    \det \Theta_Q^{(\ell,m)}&= \sum_{H\in \mathcal{H}({\Gamma};\delta X;\ell,m)}  4^{b_1(H\setminus S(H))}W_Q(z;H) (-1)^{\dist_H(u_\ell,u_m)}\times (-1)^{m-\ell} \\
    &=(-1)^{\ell+m} \iota_2(\mathcal{H}_Q(\Gamma;\delta X;\ell,m)). 
\end{align}
\end{enumerate}
By (\ref{eq:detQ1f}) and (\ref{eq:detQ2f}), finally we obtain 
\[ (G(z,\Gamma;\delta X;V))_{\ell,m}=(\Theta_Q^{-1})_{\ell,m}=\frac{\iota_2(\mathcal{H}_Q(z; \Gamma;\delta X;\ell,m))}{\iota_1(z; \mathcal{H}_Q(\Gamma;\delta X))} \]
which is the desired conclusion. 
\end{proof}
\section{Example}\label{sect:example}
Let us consider the graph $\Gamma=C_3+ P_2$ as an example.   
The vertices are labeled by $V(C_3)=\{1,2,3\}$ and $V(P_2)=\{1,4\}$.
The boundary is set by $\delta X=\{4\}$. 
See Fig.~1 for the graph. 
The families of the subgraphs $\mathcal{F}_J(\Gamma;\delta X)$ and $\mathcal{F}_J(\Gamma;\delta X;\ell,m)$ ($J=L,Q$,\;$\ell,m=1,2,3$) defined in Section~\ref{sect:graphs} are depicted in Fig.~2. 
The families of the spanning subgraphs $\mathcal{H}_J(\Gamma;\delta X)$, $\mathcal{H}_J(\Gamma;\delta X; \ell,m)$ $(\ell,m=1,2,3,\;J=L,Q)$ in Definition~\ref{def:graphfactor} are depicted by Figure~3. 

Let us consider the case for $w(e)=1$ and $V(u)=0$. In that case, the resolvent of $\chi \mathcal{M} \chi^*$ is simply reduced to 
\begin{align*} G(\Gamma;\delta X; V)&=(\chi \mathcal{M} \chi^*-zI)^{-1} = \begin{bmatrix}
-z & 1 & 1 \\ 1 & -z & 1 \\ 1 & 1 & -z
\end{bmatrix}^{-1} \\
& =\frac{-1}{(z-2)(z+1)^2}
\begin{bmatrix}
(z-1)(z+1) & (z+1) & (z+1) \\
(z+1) & (z-1)(z+1) & (z+1) \\
(z+1) & (z+1) & (z-1)(z+1)
\end{bmatrix}\\
&=
\frac{-1}{(z-2)(z+1)}
\begin{bmatrix}
z-1 & 1 & 1 \\ 1 & z-1 & 1 \\ 1 & 1 & z-1
\end{bmatrix}. 
\end{align*}
On the other hand, let us express the Green's function in terms of spanning subgraphs by 
apply Theorem~\ref{thm:main}. 
The third equality in the above may be useful to check the following computations because we will see that 
\[ \iota_1(\Gamma;\delta X)=(-1) (z-2)(z+1)^2 \]
and each $(\ell,m)$ element in the matrix corresponds to  $\iota_2(\Gamma;\delta X;\ell,m)$. 

It is easy to compute that  
\[ w_L(z;e)=
\begin{cases}
-1 & \text{: $e\in E(\Gamma)$} \\
-z+3 & \text{: $e\in S$, $t(e)=1$} \\
-z+2 & \text{: $e\in S$, $t(e)=2,3$}
\end{cases}\]
\[ w_Q(z;e)=
\begin{cases}
+1 & \text{: $e\in E(\Gamma)$} \\
-z-3 & \text{: $e\in S$, $t(e)=1$} \\
-z-2 & \text{: $e\in S$, $t(e)=2,3$}
\end{cases}\]
by the definition of $w_J$'s ($J\in \{L,Q\}$). 
For each $J\in\{L,Q\}$ case, we divide the spanning subgraphs of $\mathcal{H}_J(\Gamma;\delta)$ and also $\mathcal{H}_J(\Gamma;\delta;\ell,m)$ 
by the equivalent relation ``$\stackrel{W_J}{\sim}$" such that $H \stackrel{W_J}{\sim} H'$ iff $W_J(z;H)=W_J(z;H')$. 
The definition of weight of subgraph $H$, $W_J(z;H)$, can be seen in Defintion~\ref{def:weightgraph}. 
See Fig.~\ref{fig:3}:  
the cardinarity of the quotient set $\mathcal{H}_J(\Gamma;\delta X)/\stackrel{W_J}{\sim}$ is $6$ for $J=L$ while that is $7$ for $J=Q$, and  
the number of connected component of $H$, $\omega(H)$, determines this classification. 
On the other hand, for $\mathcal{H}_J(\Gamma;\delta X;\ell,m)$, 
see Fig.~\ref{fig:4}:  
the cardinarities of the quotient sets 
are
$|\mathcal{H}_J(\Gamma;\delta X;1,1)/\stackrel{W_J}{\sim}|=3$, 
$|\mathcal{H}_J(\Gamma;\delta X;1,2)/\stackrel{W_J}{\sim}|=3$, 
$|\mathcal{H}_J(\Gamma;\delta X;2,3)/\stackrel{W_J}{\sim}|=3$, and 
$|\mathcal{H}_J(\Gamma;\delta X;3,3)/\stackrel{W_J}{\sim}|=4$, and 
the number of the connected component of each $H$, $\omega(H)$, determines this classification. 
For example, for $(\ell,m)=(1,2)$, 
the number of subgraphs with the $2$-connected component is $2$, because of the restriction that the vertices $\{1,2\}$ must be included in the same connected component and such a connected component must be a subtree including no boundary vertices by the definition of $\mathcal{H}_J(\Gamma;\delta X;\ell,m)$. 

Summing up all such weights over all graphs in Fig~3 (for $\iota_1(\cdot)$) and om Fig~4 ($\iota_2(\cdot)$), we can compute them and find its consistency. 
See Fig. 3 (for $\iota_1$) and Fig. 4 (for $\iota_2$). 

\begin{figure}[hbtp]
    \centering
    \includegraphics[keepaspectratio, width=50mm]{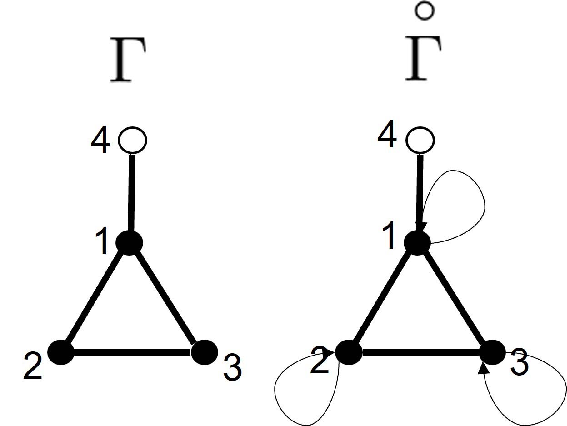}
    \caption{The graph for the example $\Gamma=C_3+ P_2$ with the boundary vertex set $\delta X=\{4\}$. The deformed graph $\SL{\Gamma}$ has self-loop to every vertex of $X\setminus \delta X=\{1,2,3\}$. }
    \label{fig:1}
\end{figure}
\begin{figure}[hbtp]
    \centering
    \includegraphics[keepaspectratio, width=180mm]{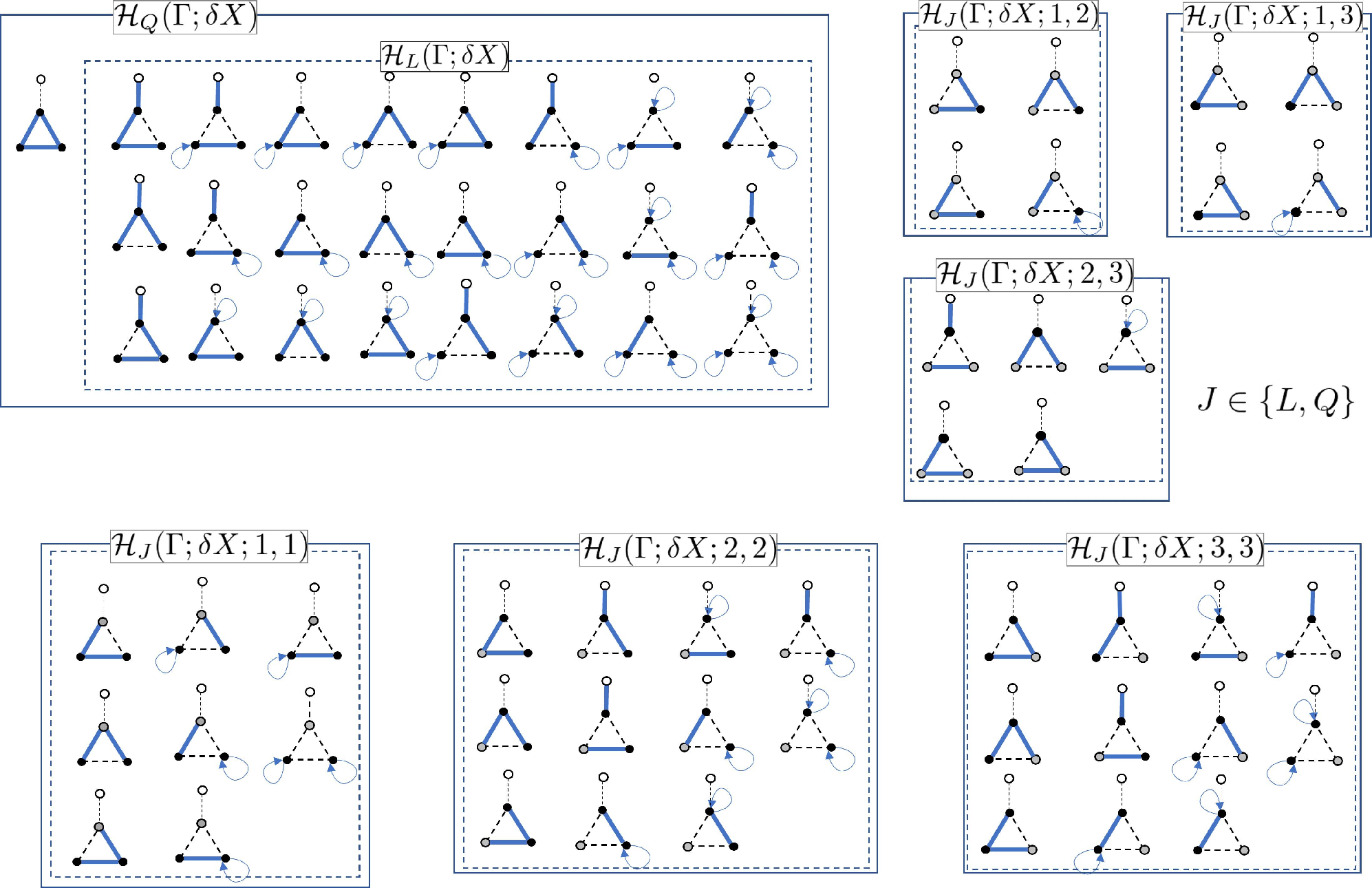}
    \caption{$\mathcal{H}_J(\Gamma;\delta X)$ and $\mathcal{H}_J(\Gamma;\delta X;\ell,m)$:  
    Since $\Gamma$ has only one cycle, whose length is odd, then we have $\mathcal{H}_Q(\Gamma;\delta X)=\mathcal{H}_L(\Gamma;\delta X) \cup \{ (C_3\cup K_1) \}$ and $\mathcal{H}_L(\Gamma;\delta X;\ell,m)=\mathcal{H}_Q(\Gamma;\delta X;\ell,m)$ for any $\ell,m=1,2,3$.    }
    \label{fig:3}
\end{figure}
\begin{figure}[hbtp]
    \centering
    \includegraphics[keepaspectratio, width=180mm]{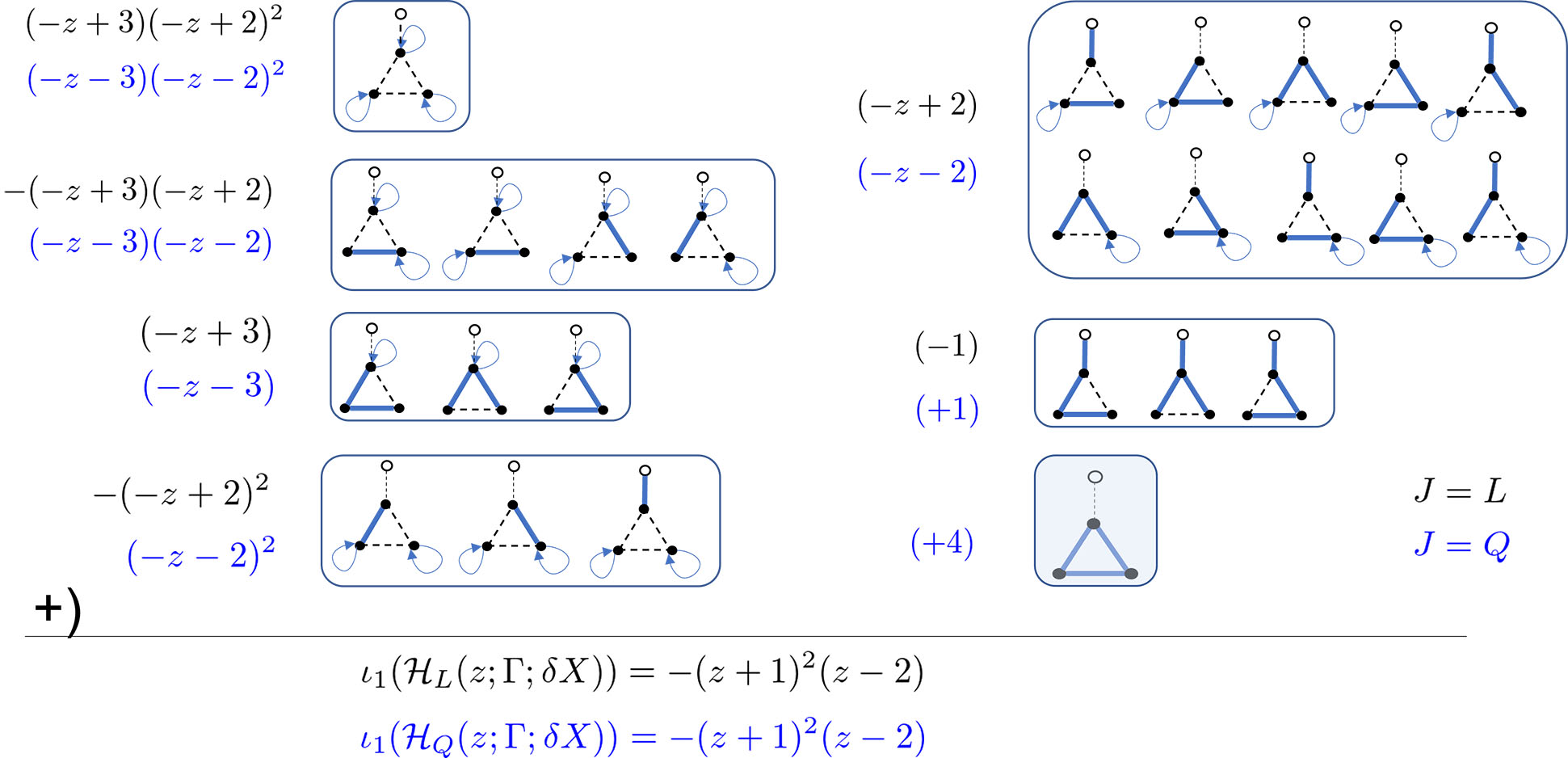}
    \caption{Computation of $\iota_1$ }
    \label{fig:4}
\end{figure}
\begin{figure}[hbtp]
    \centering
    \includegraphics[keepaspectratio, width=180mm]{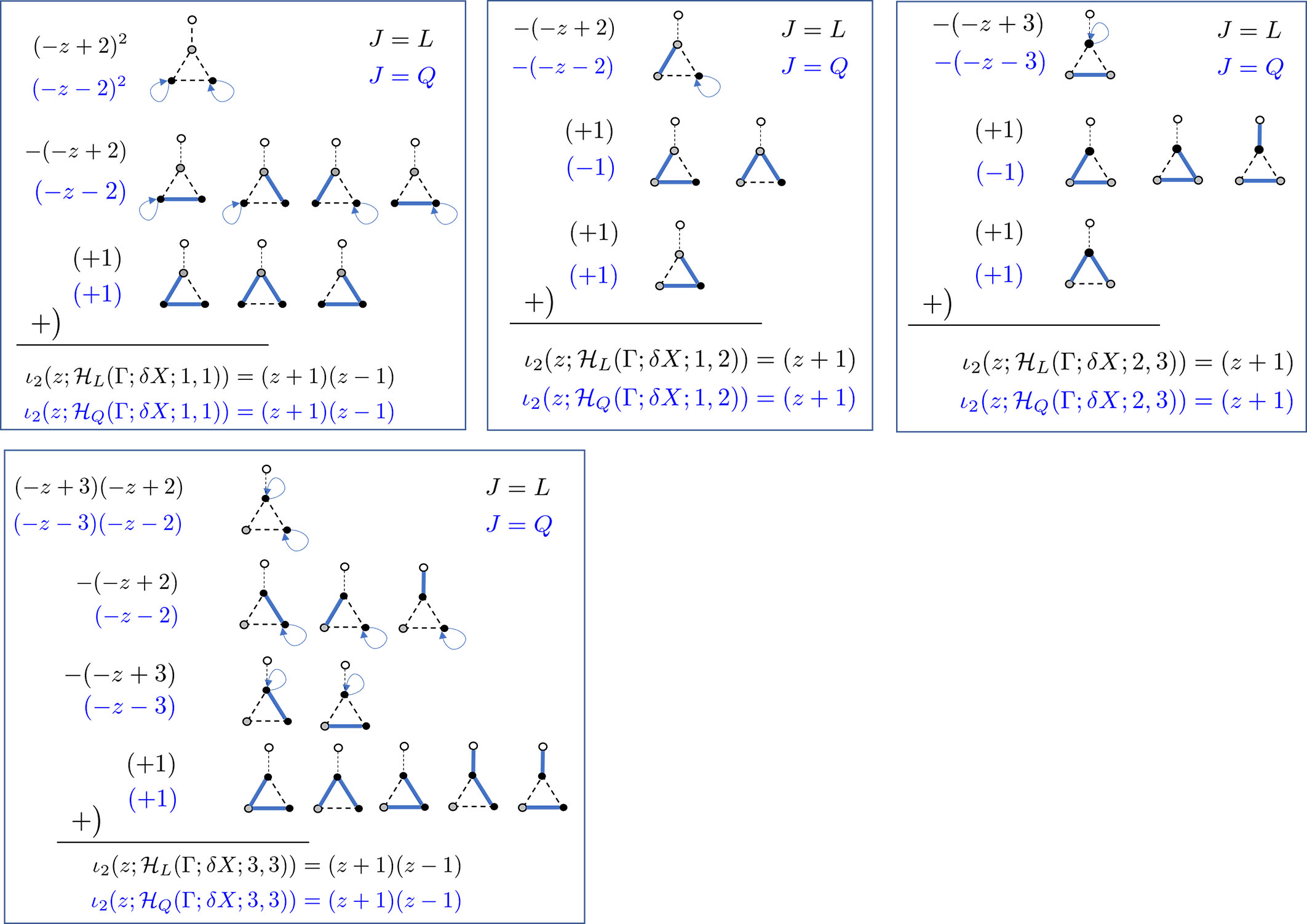}
    \caption{Computation of $\iota_2$ }
    \label{fig:5}
\end{figure}

\noindent\\
\noindent {\bf Acknowledgments}
Yu.H. acknowledges financial supports from the Grant-in-Aid of
Scientific Research (C) Japan Society for the Promotion of Science (Grant No.~18K03401). 
E.S. acknowledges financial supports from the Grant-in-Aid of
Scientific Research (C) Japan Society for the Promotion of Science (Grant No.~19K03616) and Research Origin for Dressed Photon.



\begin{small}
\bibliographystyle{jplain}

\begin{thebibliography}{99}
\bibitem{Boll}
B. Bollob{\'a}s, 
Modern Graph Theory, 
GTM 184, Springer (1994)

\bibitem{CRS2007} 
Cvetkovic, D.M., Rowlinson, P. and Simic, S. (2007) Signless Laplacians of finite graphs. Linear Algebra Appl., {\bf 423} 155--171.
\bibitem{FelHil1}
E. Feldman and M. Hillery, 
\textit{Quantum walks on graphs and quantum scattering theory}, Coding Theory and Quantum Computing, edited by D. Evans, J. Holt, C. Jones, K. Klintworth, B. Parshall, O. Pfister, and H. Ward, 
Contemporary Mathematics {\bf 381} (2005) 71--96. 
\bibitem{FelHil2}
E. Feldman and M. Hillery, 
\textit{Modifying quantum walks: A scattering theory approach}, 
Journal of Physics A: Mathematical and Theoretical {\bf 40} (2007) 11319.  
\bibitem{KHiguchi}
K. Higuchi, 
\textit{Feynman-type representation of the scattering matrix on the line via a discrete-time quantum walk},
J. Phys. A: Math. Theor., {\bf 54} (2021) 235203.

\bibitem{HKSS}
Yu. Higuchi, N. Konno, I. Sato, E. Segawa, 
A remark on zeta functions of finite graphs via quantum walks, 
Pacific Journal of Mathematics for Industry {\bf 6} (9) (2014).

\bibitem{HS}
Yu. Higuchi and E. Segawa,
Dynamical system induced by quantum walks, 
Journal of Physiscs A: Mathematical and Theoretical \textbf{52} (2009) 395202. 

\bibitem{HS_PON4}
Yu. Higuchi, E. Segawa,
Circuit equation of Grover walk,
Annales Henri Poincaré,  
{\it online first} (2024) 
https://doi.org/10.1007/s00023-023-01389-9. 
\bibitem{HS_OpticalQW}
Yu. Higuchi and E. Segawa, 
Quantum walk on graph embedded in orientable surface, 
in preparation (2023)
\bibitem{Portugal}
R. Portugal, 
Quantum Walk and Search Algorithm, 2nd Ed.,
Springer Nature Switzerland, 2018.

\end{thebibliography}

\end{small}

\end{document}